%% file: neurips_2026.tex
\documentclass{article}

 \usepackage[preprint]{neurips_2026}

\usepackage[utf8]{inputenc} 
\usepackage[T1]{fontenc}    
\usepackage{hyperref}       
\usepackage{url}            
\usepackage{booktabs}       
\usepackage{amsfonts}       
\usepackage{nicefrac}       
\usepackage{microtype}      
\usepackage{xcolor}         

\usepackage{multirow}
\usepackage{amsmath}
\usepackage{amssymb}
\usepackage{amsthm}         
\usepackage{graphicx}       
\usepackage{subcaption}     
\usepackage{adjustbox}      
\usepackage{makecell}       
\usepackage{rotating}       
\usepackage{siunitx}
\usepackage{wrapfig}

\newtheorem{proposition}{Proposition}
\newtheorem{corollary}{Corollary}

\providecommand{\Description}[1]{}
\providecommand{\ccsdesc}[2][]{}
\providecommand{\keywords}[1]{}
\providecommand{\acmConference}[4][]{}
\providecommand{\acmDOI}[1]{}
\providecommand{\acmISBN}[1]{}
\providecommand{\acmYear}[1]{}
\providecommand{\copyrightyear}[1]{}
\providecommand{\setcopyright}[1]{}
\providecommand{\citestyle}[1]{}
\providecommand{\received}[2][]{}
\providecommand{\acmBooktitle}[1]{}
\providecommand{\acmPrice}[1]{}

\title{What Gets Measured Gets Managed: Sign-aware Recommendation Needs Sign-aware Evaluation}

\author{%
  Minchan Kim \\
  Graduate School of Data Science\\
  Seoul National University\\
  Seoul, Republic of Korea \\
  \texttt{mmm5373@snu.ac.kr} \\
  \And
  Jungmin Hwang \\
  Department of Data Science\\
  Seoul National University of Science and Technology\\
  Seoul, Republic of Korea \\
  \texttt{inextro@ds.seoultech.ac.kr} \\
  \AND
  Hyunwoo Park\thanks{Corresponding author.} \\
  Graduate School of Data Science\\
  Seoul National University\\
  Seoul, Republic of Korea \\
  \texttt{hyunwoopark@snu.ac.kr} \\
}

\begin{document}

\maketitle

\input{sections/0_abstract}

\input{sections/1_introduction}
\input{sections/2_related_work}

\input{sections/3_preliminary}

\input{sections/4_analysis_and_metrics}

\input{sections/5_method}
\input{sections/6_experiments}

\input{sections/7_conclusion}

\begin{ack}
Use unnumbered first level headings for the acknowledgments. All acknowledgments
go at the end of the paper before the list of references. Moreover, you are required to declare
funding (financial activities supporting the submitted work) and competing interests (related financial activities outside the submitted work).
More information about this disclosure can be found at: \url{https://neurips.cc/Conferences/2026/PaperInformation/FundingDisclosure}.

Do {\bf not} include this section in the anonymized submission, only in the final paper. You can use the \texttt{ack} environment provided in the style file to automatically hide this section in the anonymized submission.
\end{ack}

\clearpage

\medskip

{
\small
\bibliographystyle{plainnat}
\bibliography{references}
}


\appendix

\input{sections/8_appendix}


\newpage

\section*{NeurIPS Paper Checklist}

\begin{enumerate}

\item {\bf Claims}
    \item[] Question: Do the main claims made in the abstract and introduction accurately reflect the paper's contributions and scope?
    \item[] Answer: \answerYes{}
    \item[] Justification: The abstract and introduction (\S\ref{sec:intro}) make four explicit claims, each backed by a corresponding section: (i) state-of-the-art graph-based sign-aware methods are largely valence-blind at the ranking stage despite explicit sign-aware training, and this failure is structurally masked by conventional evaluation metrics, supported by V-AUC diagnosis and linear probes (\S\ref{sec:analysis}); (ii) we propose a family of signed evaluation metrics (Signed Recall, Signed HR, and Signed NDCG) that explicitly penalize the recommendation of disliked items and reduce to conventional metrics at $\gamma\!=\!0$, defined in \S\ref{sec:signed_metrics} with the characterization theorem (Proposition~\ref{prop:characterization}) stated and proved in Appendix~\ref{app:theory}; (iii) we re-evaluate existing state-of-the-art models under the proposed metrics, demonstrating that the established performance landscape fundamentally changes when valence enters the assessment (\S\ref{sec:reeval}); and (iv) we confirm the actionability of the proposed metrics through a proof-of-concept that combines a sign classification auxiliary loss with a valence-aware inference adjustment, showing that models can be guided toward valence-aware behavior without sacrificing conventional relevance (\S\ref{sec:exp}).
    \item[] Guidelines:
    \begin{itemize}
        \item The answer NA means that the abstract and introduction do not include the claims made in the paper.
        \item The abstract and/or introduction should clearly state the claims made, including the contributions made in the paper and important assumptions and limitations. A No or NA answer to this question will not be perceived well by the reviewers.
        \item The claims made should match theoretical and experimental results, and reflect how much the results can be expected to generalize to other settings.
        \item It is fine to include aspirational goals as motivation as long as it is clear that these goals are not attained by the paper.
    \end{itemize}

\item {\bf Limitations}
    \item[] Question: Does the paper discuss the limitations of the work performed by the authors?
    \item[] Answer: \answerYes{}
    \item[] Justification: The conclusion (\S\ref{sec:conclusion}) discusses two limitations: (i) the analysis focuses on graph-based models using inner-product scoring, leaving generalization to sequential or LLM-based recommender systems for future work; (ii) the proof-of-concept auxiliary loss $\mathcal{L}_\text{sign}$ reveals that enforcing valence separation in dense negative environments can conflict with performance preservation.
    \item[] Guidelines:
    \begin{itemize}
        \item The answer NA means that the paper has no limitation while the answer No means that the paper has limitations, but those are not discussed in the paper.
        \item The authors are encouraged to create a separate "Limitations" section in their paper.
        \item The paper should point out any strong assumptions and how robust the results are to violations of these assumptions (e.g., independence assumptions, noiseless settings, model well-specification, asymptotic approximations only holding locally). The authors should reflect on how these assumptions might be violated in practice and what the implications would be.
        \item The authors should reflect on the scope of the claims made, e.g., if the approach was only tested on a few datasets or with a few runs. In general, empirical results often depend on implicit assumptions, which should be articulated.
        \item The authors should reflect on the factors that influence the performance of the approach. For example, a facial recognition algorithm may perform poorly when image resolution is low or images are taken in low lighting. Or a speech-to-text system might not be used reliably to provide closed captions for online lectures because it fails to handle technical jargon.
        \item The authors should discuss the computational efficiency of the proposed algorithms and how they scale with dataset size.
        \item If applicable, the authors should discuss possible limitations of their approach to address problems of privacy and fairness.
        \item While the authors might fear that complete honesty about limitations might be used by reviewers as grounds for rejection, a worse outcome might be that reviewers discover limitations that aren't acknowledged in the paper. The authors should use their best judgment and recognize that individual actions in favor of transparency play an important role in developing norms that preserve the integrity of the community. Reviewers will be specifically instructed to not penalize honesty concerning limitations.
    \end{itemize}

\item {\bf Theory assumptions and proofs}
    \item[] Question: For each theoretical result, does the paper provide the full set of assumptions and a complete (and correct) proof?
    \item[] Answer: \answerYes{}
    \item[] Justification: The paper has four formal results, each with assumptions stated within the proposition and a complete proof in the appendix: Proposition~\ref{prop:characterization} (characterization of signed metrics) and Corollary~\ref{cor:monotonicity} (monotonicity in $\gamma$) are proved in Appendix~\ref{app:theory}; Propositions~\ref{prop:pn_coupling} and~\ref{prop:sign_decoupling} (gradient coupling and decoupling) are proved in Appendix~\ref{app:gradient}. All theorems are numbered and cross-referenced.
    \item[] Guidelines:
    \begin{itemize}
        \item The answer NA means that the paper does not include theoretical results.
        \item All the theorems, formulas, and proofs in the paper should be numbered and cross-referenced.
        \item All assumptions should be clearly stated or referenced in the statement of any theorems.
        \item The proofs can either appear in the main paper or the supplemental material, but if they appear in the supplemental material, the authors are encouraged to provide a short proof sketch to provide intuition.
        \item Inversely, any informal proof provided in the core of the paper should be complemented by formal proofs provided in appendix or supplemental material.
        \item Theorems and Lemmas that the proof relies upon should be properly referenced.
    \end{itemize}

    \item {\bf Experimental result reproducibility}
    \item[] Question: Does the paper fully disclose all the information needed to reproduce the main experimental results of the paper to the extent that it affects the main claims and/or conclusions of the paper (regardless of whether the code and data are provided or not)?
    \item[] Answer: \answerYes{}
    \item[] Justification: We provide the full experimental specification: dataset preprocessing (\S\ref{sec:exp}, Table~\ref{tab:datasets}), evaluation protocol (7:1:2 split, 5-core filtering, three seeds $\{42, 2024, 2025\}$, the Both P\&N user-pool restriction with subset-validity audit in Appendix~\ref{app:subset_validation}), per-model hyperparameters and training details (Appendix~\ref{app:impl}, Table~\ref{tab:hyperparams}), two dataset-specific overrides (Appendix~\ref{app:impl_overrides}), and the $\mathcal{L}_\text{sign}$ search grid and inference-time scoring (Appendix~\ref{app:impl_lsign}, Eq.~\ref{eq:final_score}). The values in Table~\ref{tab:hyperparams} are exactly the configurations used in our experiments and present in the released code/scripts; running the released scripts reproduces the reported numbers. Shared infrastructure choices (embedding dimension, evaluation protocol, seeds, splits) are standardized across methods for fair comparison; per-method values follow each baseline's released-code defaults.
    \item[] Guidelines:
    \begin{itemize}
        \item The answer NA means that the paper does not include experiments.
        \item If the paper includes experiments, a No answer to this question will not be perceived well by the reviewers: Making the paper reproducible is important, regardless of whether the code and data are provided or not.
        \item If the contribution is a dataset and/or model, the authors should describe the steps taken to make their results reproducible or verifiable.
        \item Depending on the contribution, reproducibility can be accomplished in various ways. For example, if the contribution is a novel architecture, describing the architecture fully might suffice, or if the contribution is a specific model and empirical evaluation, it may be necessary to either make it possible for others to replicate the model with the same dataset, or provide access to the model. In general. releasing code and data is often one good way to accomplish this, but reproducibility can also be provided via detailed instructions for how to replicate the results, access to a hosted model (e.g., in the case of a large language model), releasing of a model checkpoint, or other means that are appropriate to the research performed.
        \item While NeurIPS does not require releasing code, the conference does require all submissions to provide some reasonable avenue for reproducibility, which may depend on the nature of the contribution. For example
        \begin{enumerate}
            \item If the contribution is primarily a new algorithm, the paper should make it clear how to reproduce that algorithm.
            \item If the contribution is primarily a new model architecture, the paper should describe the architecture clearly and fully.
            \item If the contribution is a new model (e.g., a large language model), then there should either be a way to access this model for reproducing the results or a way to reproduce the model (e.g., with an open-source dataset or instructions for how to construct the dataset).
            \item We recognize that reproducibility may be tricky in some cases, in which case authors are welcome to describe the particular way they provide for reproducibility. In the case of closed-source models, it may be that access to the model is limited in some way (e.g., to registered users), but it should be possible for other researchers to have some path to reproducing or verifying the results.
        \end{enumerate}
    \end{itemize}

\item {\bf Open access to data and code}
    \item[] Question: Does the paper provide open access to the data and code, with sufficient instructions to faithfully reproduce the main experimental results, as described in supplemental material?
    \item[] Answer: \answerYes{}
    \item[] Justification: All datasets used (Amazon-CDs, Amazon-Music, Epinions, KuaiRand, KuaiRec) are publicly available. We release code, preprocessing scripts, per-seed user-level scores, and signed-metric implementations at the anonymized link in the abstract. The signed-metric toolkit takes as input the per-user top-$K$ ranked item lists together with the $(\mathcal{P}_u, \mathcal{N}_u)$ labels, making it a drop-in addition regardless of the underlying architecture or scoring function.
    \item[] Guidelines:
    \begin{itemize}
        \item The answer NA means that paper does not include experiments requiring code.
        \item Please see the NeurIPS code and data submission guidelines (\url{https://nips.cc/public/guides/CodeSubmissionPolicy}) for more details.
        \item While we encourage the release of code and data, we understand that this might not be possible, so ``No'' is an acceptable answer. Papers cannot be rejected simply for not including code, unless this is central to the contribution (e.g., for a new open-source benchmark).
        \item The instructions should contain the exact command and environment needed to run to reproduce the results. See the NeurIPS code and data submission guidelines (\url{https://nips.cc/public/guides/CodeSubmissionPolicy}) for more details.
        \item The authors should provide instructions on data access and preparation, including how to access the raw data, preprocessed data, intermediate data, and generated data, etc.
        \item The authors should provide scripts to reproduce all experimental results for the new proposed method and baselines. If only a subset of experiments are reproducible, they should state which ones are omitted from the script and why.
        \item At submission time, to preserve anonymity, the authors should release anonymized versions (if applicable).
        \item Providing as much information as possible in supplemental material (appended to the paper) is recommended, but including URLs to data and code is permitted.
    \end{itemize}

\item {\bf Experimental setting/details}
    \item[] Question: Does the paper specify all the training and test details (e.g., data splits, hyperparameters, how they were chosen, type of optimizer, etc.) necessary to understand the results?
    \item[] Answer: \answerYes{}
    \item[] Justification: Common configuration (embedding dimension $d\!=\!64$, Adam optimizer with $\beta_1\!=\!0.9, \beta_2\!=\!0.999$, three seeds, 7:1:2 splits, 5-core filtering, label thresholds per dataset) is given in Appendix~\ref{app:impl_common}. Per-model hyperparameters (learning rate, layers, batch size, epochs, patience, init) are tabulated in Table~\ref{tab:hyperparams}; the values shown are exactly those used in our experiments and present in the released code/scripts. Two dataset-specific overrides are documented in Appendix~\ref{app:impl_overrides}. Per-method hyperparameters follow each baseline's released-code defaults; shared infrastructure choices listed above are standardized across methods for fair comparison.
    \item[] Guidelines:
    \begin{itemize}
        \item The answer NA means that the paper does not include experiments.
        \item The experimental setting should be presented in the core of the paper to a level of detail that is necessary to appreciate the results and make sense of them.
        \item The full details can be provided either with the code, in appendix, or as supplemental material.
    \end{itemize}

\item {\bf Experiment statistical significance}
    \item[] Question: Does the paper report error bars suitably and correctly defined or other appropriate information about the statistical significance of the experiments?
    \item[] Answer: \answerYes{}
    \item[] Justification: Mean$\pm$std (one standard deviation) over three random seeds is reported throughout (Tables~\ref{tab:full_diagnosis}, \ref{tab:probe}, \ref{tab:reeval}, \ref{tab:lsign}, \ref{tab:neg_infiltration}). Significance markers in Table~\ref{tab:reeval} ($\ddagger, \dagger$) are based on paired $t$-tests on user-level scores with Holm--Bonferroni correction within each column at family-wise $\alpha\!=\!0.05$; seed averaging precedes testing to avoid independence violations from per-seed pooling. Full $B/W$ counts are reported in Appendix~\ref{app:sig_test} (Table~\ref{tab:sig_all}) on three datasets spanning low, medium, and high $\rho$. Subset-induced ranking distortion is separately audited via Spearman/Kendall rank correlation (Appendix~\ref{app:subset_validation}, Table~\ref{tab:subset_validation}).
    \item[] Guidelines:
    \begin{itemize}
        \item The answer NA means that the paper does not include experiments.
        \item The authors should answer "Yes" if the results are accompanied by error bars, confidence intervals, or statistical significance tests, at least for the experiments that support the main claims of the paper.
        \item The factors of variability that the error bars are capturing should be clearly stated (for example, train/test split, initialization, random drawing of some parameter, or overall run with given experimental conditions).
        \item The method for calculating the error bars should be explained (closed form formula, call to a library function, bootstrap, etc.)
        \item The assumptions made should be given (e.g., Normally distributed errors).
        \item It should be clear whether the error bar is the standard deviation or the standard error of the mean.
        \item It is OK to report 1-sigma error bars, but one should state it. The authors should preferably report a 2-sigma error bar than state that they have a 96\% CI, if the hypothesis of Normality of errors is not verified.
        \item For asymmetric distributions, the authors should be careful not to show in tables or figures symmetric error bars that would yield results that are out of range (e.g. negative error rates).
        \item If error bars are reported in tables or plots, The authors should explain in the text how they were calculated and reference the corresponding figures or tables in the text.
    \end{itemize}

\item {\bf Experiments compute resources}
    \item[] Question: For each experiment, does the paper provide sufficient information on the computer resources (type of compute workers, memory, time of execution) needed to reproduce the experiments?
    \item[] Answer: \answerYes{}
    \item[] Justification: Per-method wall-clock times grouped into Fast ($<\!30$ min), Medium (1--3 h), and Slow ($\geq\!2$ h) on a single 24GB GPU are documented in Appendix~\ref{app:impl_compute}. Total compute is approximately 140 GPU-hours across $10\,\text{methods} \times 5\,\text{datasets} \times 3\,\text{seeds}$. The exception (NFARec on Amazon-CDs requiring $\geq\!80$GB memory due to its $46{,}464^2$ correlation matrix at $d_\text{model}=1024$) is explicitly flagged with a smaller-hardware substitute ($d_\text{model}=512$) and the resulting $\sim\!8\%$ NDCG degradation noted.
    \item[] Guidelines:
    \begin{itemize}
        \item The answer NA means that the paper does not include experiments.
        \item The paper should indicate the type of compute workers CPU or GPU, internal cluster, or cloud provider, including relevant memory and storage.
        \item The paper should provide the amount of compute required for each of the individual experimental runs as well as estimate the total compute.
        \item The paper should disclose whether the full research project required more compute than the experiments reported in the paper (e.g., preliminary or failed experiments that didn't make it into the paper).
    \end{itemize}

\item {\bf Code of ethics}
    \item[] Question: Does the research conducted in the paper conform, in every respect, with the NeurIPS Code of Ethics \url{https://neurips.cc/public/EthicsGuidelines}?
    \item[] Answer: \answerYes{}
    \item[] Justification: We have read the NeurIPS Code of Ethics and the research conforms with it. The paper uses publicly available recommendation datasets, conducts no human-subject research, and does not release any new dataset, model, or scraped content that could pose misuse risk.
    \item[] Guidelines:
    \begin{itemize}
        \item The answer NA means that the authors have not reviewed the NeurIPS Code of Ethics.
        \item If the authors answer No, they should explain the special circumstances that require a deviation from the Code of Ethics.
        \item The authors should make sure to preserve anonymity (e.g., if there is a special consideration due to laws or regulations in their jurisdiction).
    \end{itemize}

\item {\bf Broader impacts}
    \item[] Question: Does the paper discuss both potential positive societal impacts and negative societal impacts of the work performed?
    \item[] Answer: \answerYes{}
    \item[] Justification: The intended societal benefit, reducing the recommendation of explicitly disliked content (filter-bubble narrowing, user reactance, dissatisfaction harms), is the central motivation of the paper, with citations to user-experience evidence (\S\ref{sec:intro}). Potential negative impact: like any evaluation criterion, signed metrics can be over-optimized (e.g., over-conservative recommendations that suppress exploration); we mitigate this by exposing $\gamma$ as a tunable evaluation-time dial and recommending reporting at multiple $\gamma$ rather than committing to a single value (Appendix~\ref{app:gamma_choice}).
    \item[] Guidelines:
    \begin{itemize}
        \item The answer NA means that there is no societal impact of the work performed.
        \item If the authors answer NA or No, they should explain why their work has no societal impact or why the paper does not address societal impact.
        \item Examples of negative societal impacts include potential malicious or unintended uses (e.g., disinformation, generating fake profiles, surveillance), fairness considerations (e.g., deployment of technologies that could make decisions that unfairly impact specific groups), privacy considerations, and security considerations.
        \item The conference expects that many papers will be foundational research and not tied to particular applications, let alone deployments. However, if there is a direct path to any negative applications, the authors should point it out. For example, it is legitimate to point out that an improvement in the quality of generative models could be used to generate deepfakes for disinformation. On the other hand, it is not needed to point out that a generic algorithm for optimizing neural networks could enable people to train models that generate Deepfakes faster.
        \item The authors should consider possible harms that could arise when the technology is being used as intended and functioning correctly, harms that could arise when the technology is being used as intended but gives incorrect results, and harms following from (intentional or unintentional) misuse of the technology.
        \item If there are negative societal impacts, the authors could also discuss possible mitigation strategies (e.g., gated release of models, providing defenses in addition to attacks, mechanisms for monitoring misuse, mechanisms to monitor how a system learns from feedback over time, improving the efficiency and accessibility of ML).
    \end{itemize}

\item {\bf Safeguards}
    \item[] Question: Does the paper describe safeguards that have been put in place for responsible release of data or models that have a high risk for misuse (e.g., pretrained language models, image generators, or scraped datasets)?
    \item[] Answer: \answerNA{}
    \item[] Justification: The paper releases evaluation metrics and a small auxiliary loss as a proof of concept, neither of which constitutes a high-misuse-risk asset (no pretrained language model, image generator, or scraped dataset).
    \item[] Guidelines:
    \begin{itemize}
        \item The answer NA means that the paper poses no such risks.
        \item Released models that have a high risk for misuse or dual-use should be released with necessary safeguards to allow for controlled use of the model, for example by requiring that users adhere to usage guidelines or restrictions to access the model or implementing safety filters.
        \item Datasets that have been scraped from the Internet could pose safety risks. The authors should describe how they avoided releasing unsafe images.
        \item We recognize that providing effective safeguards is challenging, and many papers do not require this, but we encourage authors to take this into account and make a best faith effort.
    \end{itemize}

\item {\bf Licenses for existing assets}
    \item[] Question: Are the creators or original owners of assets (e.g., code, data, models), used in the paper, properly credited and are the license and terms of use explicitly mentioned and properly respected?
    \item[] Answer: \answerYes{}
    \item[] Justification: All datasets (Amazon Reviews, Epinions, KuaiRand, KuaiRec) and baseline implementations (LightGCN, LightGCL, XSimGCL, GFormer, SiReN, SiGRec, NFARec, SIGformer, Pone-GNN, SDCGCL) are properly cited in \S\ref{sec:exp} and \S\ref{sec:related}. We use these assets under their respective public terms; we did not modify or redistribute the original datasets, only computed evaluation metrics on top of model outputs. License information for each asset (datasets and baseline implementations) is documented in Section 9 (Licenses) of the README in our released code repository.
    \item[] Guidelines:
    \begin{itemize}
        \item The answer NA means that the paper does not use existing assets.
        \item The authors should cite the original paper that produced the code package or dataset.
        \item The authors should state which version of the asset is used and, if possible, include a URL.
        \item The name of the license (e.g., CC-BY 4.0) should be included for each asset.
        \item For scraped data from a particular source (e.g., website), the copyright and terms of service of that source should be provided.
        \item If assets are released, the license, copyright information, and terms of use in the package should be provided. For popular datasets, \url{paperswithcode.com/datasets} has curated licenses for some datasets. Their licensing guide can help determine the license of a dataset.
        \item For existing datasets that are re-packaged, both the original license and the license of the derived asset (if it has changed) should be provided.
        \item If this information is not available online, the authors are encouraged to reach out to the asset's creators.
    \end{itemize}

\item {\bf New assets}
    \item[] Question: Are new assets introduced in the paper well documented and is the documentation provided alongside the assets?
    \item[] Answer: \answerYes{}
    \item[] Justification: We release the signed-metric toolkit (Signed Recall, Signed HR, Signed NDCG implementations) and the reproducibility package (preprocessing scripts, training scripts, per-seed user-level scores) at the anonymized link in the abstract. Documentation is included in the repository; the toolkit accepts the per-user top-$K$ ranked item lists and the $(\mathcal{P}_u, \mathcal{N}_u)$ labels, with the metric definitions specified in \S\ref{sec:signed_metrics}.
    \item[] Guidelines:
    \begin{itemize}
        \item The answer NA means that the paper does not release new assets.
        \item Researchers should communicate the details of the dataset/code/model as part of their submissions via structured templates. This includes details about training, license, limitations, etc.
        \item The paper should discuss whether and how consent was obtained from people whose asset is used.
        \item At submission time, remember to anonymize your assets (if applicable). You can either create an anonymized URL or include an anonymized zip file.
    \end{itemize}

\item {\bf Crowdsourcing and research with human subjects}
    \item[] Question: For crowdsourcing experiments and research with human subjects, does the paper include the full text of instructions given to participants and screenshots, if applicable, as well as details about compensation (if any)?
    \item[] Answer: \answerNA{}
    \item[] Justification: The paper involves no crowdsourcing or human-subject research. All experiments use existing public datasets of past user interactions.
    \item[] Guidelines:
    \begin{itemize}
        \item The answer NA means that the paper does not involve crowdsourcing nor research with human subjects.
        \item Including this information in the supplemental material is fine, but if the main contribution of the paper involves human subjects, then as much detail as possible should be included in the main paper.
        \item According to the NeurIPS Code of Ethics, workers involved in data collection, curation, or other labor should be paid at least the minimum wage in the country of the data collector.
    \end{itemize}

\item {\bf Institutional review board (IRB) approvals or equivalent for research with human subjects}
    \item[] Question: Does the paper describe potential risks incurred by study participants, whether such risks were disclosed to the subjects, and whether Institutional Review Board (IRB) approvals (or an equivalent approval/review based on the requirements of your country or institution) were obtained?
    \item[] Answer: \answerNA{}
    \item[] Justification: No human-subject research was conducted; IRB approval is not applicable.
    \item[] Guidelines:
    \begin{itemize}
        \item The answer NA means that the paper does not involve crowdsourcing nor research with human subjects.
        \item Depending on the country in which research is conducted, IRB approval (or equivalent) may be required for any human subjects research. If you obtained IRB approval, you should clearly state this in the paper.
        \item We recognize that the procedures for this may vary significantly between institutions and locations, and we expect authors to adhere to the NeurIPS Code of Ethics and the guidelines for their institution.
        \item For initial submissions, do not include any information that would break anonymity (if applicable), such as the institution conducting the review.
    \end{itemize}

\item {\bf Declaration of LLM usage}
    \item[] Question: Does the paper describe the usage of LLMs if it is an important, original, or non-standard component of the core methods in this research? Note that if the LLM is used only for writing, editing, or formatting purposes and does not impact the core methodology, scientific rigorousness, or originality of the research, declaration is not required.
    \item[] Answer: \answerNA{}
    \item[] Justification: LLMs were used only for incidental writing assistance (grammar, phrasing). No LLM is part of the core methodology, experimental pipeline, or scientific analysis.
    \item[] Guidelines:
    \begin{itemize}
        \item The answer NA means that the core method development in this research does not involve LLMs as any important, original, or non-standard components.
        \item Please refer to our LLM policy (\url{https://neurips.cc/Conferences/2026/LLM}) for what should or should not be described.
    \end{itemize}

\end{enumerate}

\end{document}

%% file: sections/0_abstract.tex
\begin{abstract}
    Sign-aware recommender systems have recently been developed to leverage negative feedback for a deeper understanding of user preferences.
    However, our empirical diagnosis reveals that state-of-the-art graph-based sign-aware recommender systems are paradoxically \emph{valence-blind}.
    Even though they explicitly incorporate sign information during training, they consistently fail to differentiate liked items from disliked ones at the ranking stage, frequently infiltrating top-$K$ recommendations with disliked content.
    Through linear probing, we show that while valence information exists in the learned embeddings, it remains inaccessible to the inner-product scoring function.
    This widespread failure remains entirely undetected because conventional evaluation metrics, such as Recall, HR, and NDCG, assign a uniform utility of zero to both negative and unobserved items, creating a systematic evaluation blind spot.
    To bridge this gap, we propose a family of signed metrics, Signed Recall, Signed HR, and Signed NDCG, that explicitly penalize the recommendation of disliked content.
    Systematic re-evaluation under our proposed metrics fundamentally reshapes the established performance landscape, revealing that methods ranked highly under conventional metrics often fail to protect users from disliked content.
    Finally, through a proof-of-concept auxiliary loss, we confirm that the proposed metrics provide actionable training signals, guiding models toward valence-aware behavior without sacrificing conventional relevance.
    For transparency, our source code is available at: \url{https://anonymous.4open.science/r/signed-rec-benchmark-07E4}
\end{abstract}

%% file: sections/1_introduction.tex
\section{Introduction}
\label{sec:intro}

Sign-aware recommender systems have recently emerged as a promising direction by explicitly incorporating negative user feedback, such as low ratings, dislikes, and skips, alongside positive interactions to construct signed graphs that distinguish the nature of each signal~\cite{siren, sigrec, nfarec, sigformer, ponegnn, sdcgcl}.
Unlike conventional graph-based recommender systems that treat all observed interactions as positive signals~\cite{lightgcn, gformer, lightgcl, xsimgcl}, these methods are designed with the explicit goal of distinguishing the valence of user interactions, treating positive and negative feedback as carrying fundamentally different preference signals.

However, our empirical analysis reveals a paradox.
Even though sign-aware methods explicitly incorporate sign information during training, they consistently fail to differentiate liked items from disliked ones at the ranking stage.
As illustrated in Figure~\ref{fig:score_dist}, the predicted score distributions for positive and negative items overlap almost entirely and, in this specific case, the negative distribution is shifted slightly higher, indicating the model assigns marginally greater preference scores to disliked items than to liked items.
We quantify this failure through Valence AUC (V-AUC), which measures the probability that a model ranks a positive item above a negative item for the same user.
Across four of five benchmarks, existing sign-aware methods achieve V-AUC near random chance, indicating that their ranking stage largely fails to reflect the valence of observed interactions.

To understand why this happens, we apply linear probes~\cite{alain2017probing} to the learned embeddings and find that valence information is in fact present in the embedding space, yet remains inaccessible to the inner-product scoring function at inference.
This gap between what the embeddings encode and what the scoring function can extract points to a structural bottleneck inherent to inner-product-based ranking, rather than a failure of the sign-aware training objective.
This bottleneck is particularly pronounced in architecture-driven methods, where sophisticated sign-aware encodings are ultimately reduced to a single inner product score at the ranking stage.
The problem is further compounded by the evaluation environment itself.
Since model selection and hyperparameter tuning are driven entirely by conventional metrics, any valence-oriented signal that conflicts with relevance is suppressed during training, leaving this misalignment unaddressed even in models that have enough capacity to encode valence.
In short, valence cannot be managed unless it is measured.

\begin{wrapfigure}{r}{0.42\linewidth}
\vspace{-1em}
\centering
\includegraphics[width=\linewidth]{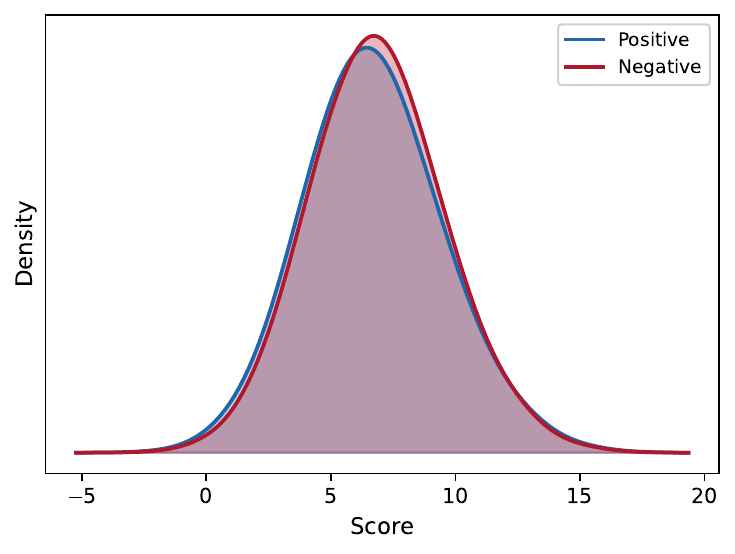}
\caption{Density of predicted scores for positive (blue) and negative (red) test items. Results are shown for SIGformer, a representative sign-aware model, evaluated on the Amazon-CDs benchmark dataset.}
\label{fig:score_dist}
\vspace{-1em}
\end{wrapfigure}

This misalignment goes undetected under conventional evaluation protocols.
Metrics such as Recall, HR, and NDCG assign a uniform utility of zero to both negative and unobserved items, making them structurally incapable of penalizing the recommendation of disliked content.
As a result, a model that frequently places disliked items in the top-$K$ recommendation list receives the same evaluation score as one that avoids them, masking a systematic degradation of user experience~\cite{ma2022negative, benporat2022departing, fitzsimons2004reactance}.

Motivated by these findings, we propose a family of signed evaluation metrics, Signed Recall, Signed HR, and Signed NDCG, that explicitly penalize the recommendation of disliked content by extending conventional metrics with a tunable penalty parameter $\gamma$.
At $\gamma=0$, the proposed metrics reduce exactly to their conventional counterparts, ensuring backward compatibility with previously reported results.
Re-evaluating existing state-of-the-art models under these metrics fundamentally reshapes the established performance landscape, revealing that methods ranked highly under conventional metrics often fail to protect users from disliked content.
Through a proof-of-concept auxiliary loss, we additionally verify that the proposed metrics provide actionable training signals, guiding models toward valence-aware behavior without sacrificing conventional relevance performance.

In summary, the main contributions of this work are as follows:
\begin{itemize}
    \item We diagnose that state-of-the-art graph-based sign-aware methods are largely valence-blind at the ranking stage despite explicit sign-aware training, and demonstrate that this failure is structurally masked by conventional evaluation metrics.
    \item We propose a family of signed evaluation metrics, Signed Recall, Signed HR, and Signed NDCG, that explicitly penalize the recommendation of disliked items and reduce to conventional metrics at $\gamma=0$.
    \item We re-evaluate existing state-of-the-art models under the proposed metrics, demonstrating that the established performance landscape fundamentally changes when valence enters the assessment.
    \item We confirm the actionability of the proposed metrics through a proof-of-concept that combines a sign classification auxiliary loss with a valence-aware inference adjustment, showing that models can be guided toward valence-aware behavior without sacrificing conventional relevance.
\end{itemize}

%% file: sections/2_related_work.tex
\section{Related Work}
\label{sec:related}

\paragraph{Sign-aware Graph-based Recommender Systems} Early works like SiReN~\cite{siren} and SiGRec~\cite{sigrec} pioneered this field by learning separate representations for positive and negative interactions via GNNs (graph neural networks) and sign-aware BPR (Bayesian personalized ranking) objectives.
NFARec~\cite{nfarec} employs feedback-aware hypergraph convolutions, while SIGformer~\cite{sigformer} leverages sign-aware spectral encodings within a Transformer architecture.
Most recently, Pone-GNN~\cite{ponegnn} disentangles user interests and disinterests, and SDCGCL~\cite{sdcgcl} utilizes dual-channel graph contrastive learning.
Despite these remarkable architectural advancements in leveraging negative feedback, a critical misalignment persists in that these sign-aware models are evaluated using conventional, sign-blind metrics, leaving their valence-aware behavior largely unexamined in the literature.

%% file: sections/3_preliminary.tex
\section{Preliminaries}
\label{sec:prelim}

\subsection{Problem Formulation}
We consider a recommender system with a user set $\mathcal{U}$ and an item set $\mathcal{I}$, where $|\mathcal{U}|=n$ and $|\mathcal{I}|=m$.
The user-item interactions are represented as $\mathcal{D}=\{(u, i, y_{ui})\}$, where the label $y_{ui}\in\{+1, -1, ?\}$ denotes a user's positive, negative, or unobserved interaction, respectively.
For a given user $u$, we partition the items into positive($\mathcal{P}_u$), negative($\mathcal{N}_u$), and unobserved($\mathcal{U}_u$).
This interaction data naturally forms a signed bipartite graph $\mathcal{G}=(\mathcal{V}, \mathcal{E}^+, \mathcal{E}^-)$, where the node set is $\mathcal{V}=\mathcal{U}\cup\mathcal{I}$ and the signed edges are defined as $\mathcal{E}^{\pm}=\{(u,i):y_{ui}=\pm1\}$.

To capture user preferences, sign-aware recommender systems learn $d$-dimensional representations, $\mathbf{e}_u, \mathbf{e}_i\in\mathbb{R}^d$, for each user and item by encoding the signed graph $\mathcal{G}$.
The predicted preference score is then computed via inner product, $\hat{y}_{ui}=\mathbf{e}_u^\top\mathbf{e}_i$, which is subsequently used to rank items and generate the top-$K$ recommendation list.
To optimize these representations, sign-aware methods often extend the original BPR into a sign-aware BPR objective.
While specific formulations vary, a representative approach is the signed extension proposed by SIGformer \cite{sigformer}, defined as:
\begin{equation}
\mathcal{L}=-\sum_{(u,i)\in\mathcal{E}^+}\ln\sigma\left(\hat{y}_{ui}-\hat{y}_{uj}\right) -\sum_{(u,i)\in\mathcal{E}^-}\ln\sigma\left(\beta(\hat{y}_{ui}-\hat{y}_{uj})\right),
\end{equation}
where for each positive or negative feedback $(u,i)$, the model samples an item $j\in\mathcal{U}_u$ that the user has not interacted with, and $\sigma(\cdot)$ is the sigmoid function.
Crucially, the hyperparameter $\beta\in[-1, 1]$ controls the influence from the negative feedback.

\subsection{Conventional Evaluation Metrics}
To assess the quality of the generated recommendations, standard evaluation protocols rely on a suite of canonical metrics.
Given a user $u$ and a ranked top-$K$ recommendation list $\pi_K(u)$, these metrics evaluate performance based exclusively on a binary relevance indicator, defined as $\mathrm{rel}_u(i)=\mathbb{I}[i\in\mathcal{P}_u]$, where $\mathbb{I}(\cdot)$ is the indicator function.
The three canonical metrics, Recall@$K$, HR@$K$, NDCG@$K$ are formulated as follows:
\begin{align}
\mathrm{Recall}@K=\frac{|\mathcal{P}_u\cap\pi_K(u)|}{|\mathcal{P}_u|}, \quad
\mathrm{HR}@K=\mathbb{I}[|\mathcal{P}_u\cap\pi_K(u)|>0] \\
\mathrm{NDCG}@K=\frac{\mathrm{DCG}@K}{\mathrm{IDCG}@K}, \quad
\mathrm{DCG}@K=\sum_{k=1}^K\frac{\mathrm{rel}_u(\pi_K(u)[k])}{\log_2(k+1)}
\end{align}
Here, $\pi_K(u)[k]$ denotes the item ranked at the $k$-th position, and $\mathrm{IDCG}@K$ represents the ideal DCG obtained by perfectly ranking all positive items.
While these metrics effectively measure the retrieval of relevant items, their relevance-centric formulation exposes a critical structural limitation in the sign-aware setting.
They assign the exact same utility score of zero to both explicitly negative items ($\mathcal{N}_u$) and unobserved items ($\mathcal{U}_u$).
Consequently, this evaluation paradigm remains completely sign-blind, strictly rewarding the retrieval of positive content while failing to penalize the recommendation of explicitly disliked items.
This sign-blindness is not merely a theoretical concern.
In \S\ref{sec:analysis}, we demonstrate empirically that it masks a systematic failure of sign-aware methods to separate liked from disliked items at the ranking stage.

%% file: sections/4_analysis_and_metrics.tex
\section{Empirical Diagnosis of Valence-Blindness}
\label{sec:analysis}

In this section, we investigate whether existing sign-aware models truly differentiate positive from negative items during inference, and uncover the structural causes behind their limitations.
Specifically, we introduce V-AUC as a diagnostic for positive-versus-negative separation, demonstrating that state-of-the-art methods widely fail this test across multiple benchmarks (\S\ref{sec:v_auc}).
We then scrutinize the root cause of this failure, arguing that it reflects a critical misalignment between valence-oriented training signals and inner-product scoring, heavily amplified by the relevance-oriented evaluation environment (\S\ref{sec:inner_prod}).

\subsection{Are Sign-aware Methods Really Sign-aware?}
\label{sec:v_auc}
As discussed in \S\ref{sec:prelim}, conventional metrics solely evaluate whether a model ranks explicitly interacted items $(\mathcal{P}_u\cup\mathcal{N}_u)$ higher than unobserved ones $(\mathcal{U}_u)$.
However, true sign-awareness requires a model to prioritize positive interactions over negative ones within the observed set.

\begin{table*}[b!]
\caption{Full diagnostic analysis of sign-aware methods across five benchmarks ($K=20$). $\rho=|\mathcal{E}^-|/|\mathcal{E}^+|$ denotes the ratio of negative to positive interactions. ``V-AUC'' is the probability of ranking a positive item above a negative one. ``Overlap'' is the fraction of users whose positive and negative score distributions overlap. Methods are grouped by where sign-awareness is embedded: \textbf{Architecture-driven} (embed sign in model structure with standard or auxiliary-only loss) vs.\ \textbf{Loss-driven} (modify the training objective). Within each group, methods are ordered chronologically. Mean\textsubscript{$\pm$std} over three seeds. \textbf{Bold}: best, \underline{underline}: second best per benchmark.}
\label{tab:full_diagnosis}
\centering
\footnotesize
\setlength{\tabcolsep}{4pt}
\renewcommand{\arraystretch}{0.9}
\begin{tabular}{l ll ccc}
\toprule
\textbf{Dataset} ($\rho$) & \textbf{Group} & \textbf{Method} & \textbf{NDCG}$(\uparrow)$ & \textbf{V-AUC}$(\uparrow)$ & \textbf{Overlap}$(\downarrow)$ \\
\midrule
\multirow{6}{*}{Amazon-CDs (0.22)}
& \multirow{3}{*}{Arch.-driven}
  & SiGRec     & .0506\textsubscript{$\pm$.0024} & .527\textsubscript{$\pm$.030} & 73.9\textsubscript{$\pm$.6}\% \\
& & NFARec     & \underline{.0840\textsubscript{$\pm$.0008}} & .553\textsubscript{$\pm$.000} & 74.6\textsubscript{$\pm$.2}\% \\
& & SIGformer  & \textbf{.0865\textsubscript{$\pm$.0019}} & .486\textsubscript{$\pm$.001} & 75.4\textsubscript{$\pm$.1}\% \\
\cmidrule(lr){2-6}
& \multirow{3}{*}{Loss-driven}
  & SiReN      & .0810\textsubscript{$\pm$.0028} & .560\textsubscript{$\pm$.001} & \textbf{71.6\textsubscript{$\pm$.4}\%} \\
& & Pone-GNN   & .0641\textsubscript{$\pm$.0047} & \underline{.565\textsubscript{$\pm$.001}} & \textbf{71.6\textsubscript{$\pm$.0}\%} \\
& & SDCGCL     & .0550\textsubscript{$\pm$.0218} & \textbf{.588\textsubscript{$\pm$.026}} & \underline{71.9\textsubscript{$\pm$.8}\%} \\
\midrule
\multirow{6}{*}{Amazon-Music (0.25)}
& \multirow{3}{*}{Arch.-driven}
  & SiGRec     & .0975\textsubscript{$\pm$.0367} & \textbf{.616\textsubscript{$\pm$.065}} & 68.6\textsubscript{$\pm$6.6}\% \\
& & NFARec     & .1026\textsubscript{$\pm$.0013} & .533\textsubscript{$\pm$.004} & 73.0\textsubscript{$\pm$1.7}\% \\
& & SIGformer  & \textbf{.1766\textsubscript{$\pm$.0069}} & .486\textsubscript{$\pm$.002} & 71.5\textsubscript{$\pm$.6}\% \\
\cmidrule(lr){2-6}
& \multirow{3}{*}{Loss-driven}
  & SiReN      & \underline{.1683\textsubscript{$\pm$.0041}} & .542\textsubscript{$\pm$.010} & 68.6\textsubscript{$\pm$1.1}\% \\
& & Pone-GNN   & .1566\textsubscript{$\pm$.0055} & .595\textsubscript{$\pm$.004} & \textbf{67.2\textsubscript{$\pm$.7}\%} \\
& & SDCGCL     & .1642\textsubscript{$\pm$.0007} & \underline{.607\textsubscript{$\pm$.006}} & \underline{67.6\textsubscript{$\pm$.7}\%} \\
\midrule
\multirow{6}{*}{Epinions (0.37)}
& \multirow{3}{*}{Arch.-driven}
  & SiGRec     & .0410\textsubscript{$\pm$.0004} & .487\textsubscript{$\pm$.011} & 82.5\textsubscript{$\pm$.3}\% \\
& & NFARec     & .0501\textsubscript{$\pm$.0004} & .480\textsubscript{$\pm$.001} & 83.3\textsubscript{$\pm$.1}\% \\
& & SIGformer  & \textbf{.0615\textsubscript{$\pm$.0023}} & .458\textsubscript{$\pm$.002} & 83.6\textsubscript{$\pm$.2}\% \\
\cmidrule(lr){2-6}
& \multirow{3}{*}{Loss-driven}
  & SiReN      & .0491\textsubscript{$\pm$.0009} & .500\textsubscript{$\pm$.001} & 80.8\textsubscript{$\pm$.0}\% \\
& & Pone-GNN   & \underline{.0540\textsubscript{$\pm$.0005}} & \underline{.523\textsubscript{$\pm$.000}} & \underline{80.1\textsubscript{$\pm$.2}\%} \\
& & SDCGCL     & .0534\textsubscript{$\pm$.0058} & \textbf{.549\textsubscript{$\pm$.008}} & \textbf{78.2\textsubscript{$\pm$.5}\%} \\
\midrule
\multirow{6}{*}{KuaiRand (1.25)}
& \multirow{3}{*}{Arch.-driven}
  & SiGRec     & .0495\textsubscript{$\pm$.0024} & .530\textsubscript{$\pm$.004} & 74.5\textsubscript{$\pm$.3}\% \\
& & NFARec     & .0463\textsubscript{$\pm$.0006} & .498\textsubscript{$\pm$.003} & 74.8\textsubscript{$\pm$.1}\% \\
& & SIGformer  & \textbf{.0680\textsubscript{$\pm$.0014}} & .538\textsubscript{$\pm$.002} & 74.2\textsubscript{$\pm$.2}\% \\
\cmidrule(lr){2-6}
& \multirow{3}{*}{Loss-driven}
  & SiReN      & .0374\textsubscript{$\pm$.0007} & \textbf{.627\textsubscript{$\pm$.005}} & \textbf{68.2\textsubscript{$\pm$.2}\%} \\
& & Pone-GNN   & \underline{.0516\textsubscript{$\pm$.0025}} & .528\textsubscript{$\pm$.015} & 71.7\textsubscript{$\pm$.4}\% \\
& & SDCGCL     & .0428\textsubscript{$\pm$.0041} & \underline{.545\textsubscript{$\pm$.008}} & \underline{69.5\textsubscript{$\pm$.4}\%} \\
\midrule
\multirow{6}{*}{KuaiRec (5.95)}
& \multirow{3}{*}{Arch.-driven}
  & SiGRec     & \underline{.0412\textsubscript{$\pm$.0011}} & \textbf{.921\textsubscript{$\pm$.004}} & \underline{69.5\textsubscript{$\pm$.4}\%} \\
& & NFARec     & .0007\textsubscript{$\pm$.0001} & .168\textsubscript{$\pm$.016} & 100.0\% \\
& & SIGformer$^\dagger$ & \textbf{.0496\textsubscript{$\pm$.0014}} & \underline{.891\textsubscript{$\pm$.015}} & \textbf{67.8\textsubscript{$\pm$1.9}\%} \\
\cmidrule(lr){2-6}
& \multirow{3}{*}{Loss-driven}
  & SiReN      & .0153\textsubscript{$\pm$.0225} & .463\textsubscript{$\pm$.244} & 99.6\textsubscript{$\pm$.7}\% \\
& & Pone-GNN   & .0314\textsubscript{$\pm$.0062} & .677\textsubscript{$\pm$.002} & 98.8\textsubscript{$\pm$.7}\% \\
& & SDCGCL     & .0382\textsubscript{$\pm$.0008} & .874\textsubscript{$\pm$.004} & 82.5\textsubscript{$\pm$.4}\% \\
\bottomrule
\multicolumn{6}{l}{\footnotesize $^\dagger$ SIGformer uses $\beta = -0.2$ on KuaiRec, which reverses the negative feedback direction.}
\end{tabular}
\end{table*}

To strictly quantify this discriminative capability, we introduce a diagnostic metric, V-AUC.
For a given user $u$ with $\mathcal{P}_u, \mathcal{N}_u\neq\emptyset$, V-AUC isolates the evaluation to explicitly interacted items, measuring the probability that the model assigns a higher preference score to a positive item ($p\in\mathcal{P}_u$) than to a negative item ($n\in\mathcal{N}_u$):
\begin{equation}
    \text{V-AUC}(u)=\frac{1}{|\mathcal{P}_u||\mathcal{N}_u|}\sum_{p\in\mathcal{P}_u}\sum_{n\in\mathcal{N}_u}\mathbb{I}[\hat{y}_{up}>\hat{y}_{un}]
\end{equation}

A V-AUC score of 1.0 indicates perfect separation, while 0.5 signifies that the scoring function is entirely valence-blind, behaving no better than a random guess.
Alongside V-AUC, we quantify this failure via the \emph{Overlap} ratio, defined as the fraction of users whose predicted score distributions for positive and negative items entangle.
Formally, this occurs when $\min_{p\in\mathcal{P}_u}\hat{y}_{up}<\max_{n\in\mathcal{N}_u}\hat{y}_{un}$.

Table \ref{tab:full_diagnosis} reports these diagnostic results across five real-world benchmarks (comprehensive dataset configurations are deferred to \S\ref{sec:exp}.)
While sign-aware methods achieve competitive NDCG scores, their V-AUC scores are close to 0.5.
Paradoxically, the highest-NDCG models frequently exhibit the worst valence separation.
For example, SIGformer dominates NDCG on the Amazon datasets and Epinions, but records V-AUC scores below 0.5, actively ranking disliked items above liked ones.

\begin{table}[b!]
\caption{Linear probe AUC for sign classification on learned embeddings ($K\!=\!20$). A logistic regression probe on $\mathbf{e}_u \odot \mathbf{e}_i$ (5-fold CV) reveals how much valence information exists in the embedding space, independent of whether the inner-product scoring function can extract it. Signed methods are further grouped by where sign-awareness is embedded: \textbf{Arch.-driven} (sign in model structure) vs.\ \textbf{Loss-driven} (sign in training objective). Within each group, methods are ordered chronologically. Results reported as mean\textsubscript{$\pm$std} over three seeds. \textbf{Bold}: best, \underline{underline}: second best per column.}
\label{tab:probe}
\centering
\footnotesize
\begin{tabular}{ll l ccccc}
\toprule
& & \textbf{Method} & \textbf{\makecell{Amazon-CDs}} & \textbf{\makecell{Amazon-Music}} & \textbf{\makecell{Epinion}} & \textbf{\makecell{KuaiRand}} & \textbf{\makecell{KuaiRec}} \\
\midrule
\multirow{4}{*}{\rotatebox{90}{Unsigned}}
& & LightGCN    & .565\textsubscript{$\pm$.003} & .549\textsubscript{$\pm$.009} & .555\textsubscript{$\pm$.003} & .560\textsubscript{$\pm$.029} & .831\textsubscript{$\pm$.034} \\
& & LightGCL    & .550\textsubscript{$\pm$.004} & .523\textsubscript{$\pm$.008} & .538\textsubscript{$\pm$.004} & .533\textsubscript{$\pm$.026} & .822\textsubscript{$\pm$.002} \\
& & XSimGCL     & .558\textsubscript{$\pm$.001} & .533\textsubscript{$\pm$.006} & .533\textsubscript{$\pm$.008} & .532\textsubscript{$\pm$.004} & .480\textsubscript{$\pm$.042} \\
& & GFormer     & .604\textsubscript{$\pm$.003} & .564\textsubscript{$\pm$.016} & .581\textsubscript{$\pm$.006} & .601\textsubscript{$\pm$.006} & .869\textsubscript{$\pm$.005} \\
\midrule
\multirow{6}{*}{\rotatebox{90}{Signed}}
& \multirow{3}{*}{Arch.}
  & SiGRec      & \textbf{.696\textsubscript{$\pm$.002}} & \textbf{.674\textsubscript{$\pm$.008}} & \textbf{.732\textsubscript{$\pm$.005}} & \textbf{.639\textsubscript{$\pm$.004}} & \underline{.911\textsubscript{$\pm$.006}} \\
& & NFARec      & \underline{.642\textsubscript{$\pm$.001}} & .556\textsubscript{$\pm$.009} & \underline{.631\textsubscript{$\pm$.006}} & .582\textsubscript{$\pm$.003} & .832\textsubscript{$\pm$.003} \\
& & SIGformer   & .561\textsubscript{$\pm$.011} & .559\textsubscript{$\pm$.019} & .584\textsubscript{$\pm$.001} & .593\textsubscript{$\pm$.010} & .893\textsubscript{$\pm$.013} \\
\cmidrule(lr){2-8}
& \multirow{3}{*}{Loss}
  & SiReN       & .602\textsubscript{$\pm$.020} & .556\textsubscript{$\pm$.016} & .565\textsubscript{$\pm$.011} & \underline{.630\textsubscript{$\pm$.004}} & .746\textsubscript{$\pm$.123} \\
& & Pone-GNN    & .563\textsubscript{$\pm$.005} & \underline{.596\textsubscript{$\pm$.003}} & .555\textsubscript{$\pm$.007} & .603\textsubscript{$\pm$.006} & .859\textsubscript{$\pm$.025} \\
& & SDCGCL      & .593\textsubscript{$\pm$.031} & .594\textsubscript{$\pm$.007} & .554\textsubscript{$\pm$.010} & .540\textsubscript{$\pm$.012} & \textbf{.924\textsubscript{$\pm$.002}} \\
\bottomrule
\end{tabular}%
\end{table}

\subsection{Why Does This Happen?}
\label{sec:inner_prod}
The observed results in \S\ref{sec:v_auc} raise a critical question about whether sign-aware methods fail because the training signal is insufficient to encode valence, or because the learned representations contain valence information that the scoring function cannot access. To answer this, we apply linear probes~\cite{alain2017probing} to the learned embeddings.
Specifically, we train a logistic regression classifier on the element-wise embedding product $\mathbf{e}_u\odot\mathbf{e}_i$ to predict whether an interaction is positive or negative to the user, and check the probing AUC in Table \ref{tab:probe}.

Since the probe uses element-wise product rather than inner product, a high probe AUC alongside a low V-AUC directly indicates that valence information exists in the embeddings but is inaccessible to inner-product scoring.
Comparing Table~\ref{tab:probe} with the V-AUC results in Table~\ref{tab:full_diagnosis} reveals a consistent gap across all methods.
For instance, SiGRec achieves a probe AUC of 0.696 on Amazon-CDs, indicating that its embeddings already carry meaningful valence information.
Yet its V-AUC remains at 0.527, barely above random chance.
This gap confirms that the failure is not one of learning but of information extraction from the learned embeddings.
While embedding space encodes valence, the inner-product $\hat{y}_{ui}=\mathbf{e}_u^\top\mathbf{e}_i$ cannot read it, which is consistent with the known limitations of signed-graph theory~\cite{sgcn, sbgnn, sbgcl}.
This pattern is particularly evident in architecture-driven methods, where sophisticated sign-aware encodings are ultimately neutralized at the scoring stage despite their embeddings carrying substantial valence information.

%% file: sections/5_method.tex
\section{The Proposed Signed Evaluation Metrics}
\label{sec:signed_metrics}

To bridge the evaluation gap diagnosed in \S\ref{sec:analysis}, we extend conventional binary relevance metrics to a family of signed evaluation metrics.
We first redefine the relevance indicator into a three-valued signed relevance $\mathrm{rel}_s(i)$:

\begin{equation}
\mathrm{rel}_s(i)=
    \begin{cases}
        +1, & \text{if } i\in\mathcal{P}_u \\
        -\gamma, & \text{if } i\in\mathcal{N}_u \\
        0, & \text{if } i\in\mathcal{U}_u
    \end{cases},
\end{equation}
where $\gamma\geq0$ is a tunable penalty parameter that controls the cost of recommending a negative item relative to the reward of recommending a positive item.
The symmetric default $\gamma=1$ serves as a domain-neutral baseline, treating the retrieval of a negative item as canceling out the reward of a positive item at the same rank.
While this default provides a balanced baseline, $\gamma$ can be adjusted to reflect specific domain characteristics.
For instance, $\gamma>1$ is suitable for domains where negative user experiences carry high cost, such as news feeds or advertising.
Conversely, $\gamma<1$ is appropriate for exploratory settings where the priority lies in content discovery and the penalty for occasional negative feedback is relatively low.
Building upon this definition, we propose a suite of signed evaluation metrics, Signed Recall (SRecall), Signed HR (SHR), and Signed NDCG(SNDCG), formulated as follows:

\textbf{Signed Recall@$K$}
\begin{equation}
    \mathrm{SRecall}@K=\frac{|\mathcal{P}_u\cap\pi_K(u)|-\gamma\cdot|\mathcal{N}_u\cap\pi_K(u)|}{|\mathcal{P}_u|}
\end{equation}

\textbf{Signed HR@$K$}
\begin{equation}
    \mathrm{SHR}@K=\mathbb{I}[|\mathcal{P}_u\cap\pi_K(u)|>0]-\gamma\cdot\mathbb{I}[|\mathcal{N}_u\cap\pi_K(u)|>0]
\end{equation}

\textbf{Signed NDCG@$K$}
\begin{equation}
    \mathrm{SNDCG}@K=\frac{\mathrm{SDCG}@K}{\mathrm{ISDCG}@K}, \quad \mathrm{SDCG}@K=\sum_{k=1}^K\frac{\mathrm{rel}_s(\pi_K(u)[k])}{\log_2(k+1)}, 
\end{equation}
where $\mathrm{ISDCG}@K$ is the ideal signed DCG, computed by placing all positive items in the top positions and all negative items at the bottom of the list.

Each proposed metric rewards the retrieval of liked items while explicitly penalizing the retrieval of disliked items.
Notably, the formulation reduces to conventional metrics at $\gamma=0$, ensuring strict backward compatibility (formalized in Proposition~\ref{prop:characterization}; sensitivity to $\gamma$ analyzed in Appendix~\ref{app:gamma}).
This property allows the proposed metrics to be integrated into existing evaluation protocols as a drop-in replacement, preserving comparability with previously reported results.

%% file: sections/6_experiments.tex
\section{Experiments}
\label{sec:exp}

\subsection{Experimental Setup}
\paragraph{Datasets}
We use five real-world datasets spanning a wide range of negative interaction ratios ($\rho$ from 0.22 to 5.95).
These include three rating-based datasets, Amazon-CDs~\cite{mcauley2013amateurs}, Amazon-Music~\cite{mcauley2013amateurs}, and Epinions~\cite{tang2012etrust}, where ratings$\geq4$ are treated as positive and $<4$ as negative.
We also include two short-video datasets, KuaiRand~\cite{gao2022kuairand} and KuaiRec~\cite{gao2022kuairec}, using click indicators and viewing-duration ratios to define the label.
Detailed statistical characteristics for each dataset are summarized in Table~\ref{tab:datasets}.

\paragraph{Baselines}
We re-evaluate six state-of-the-art sign-aware recommender systems, SiReN~\cite{siren}, SiGRec~\cite{sigrec}, NFARec~\cite{nfarec}, SIGformer~\cite{sigformer}, Pone-GNN~\cite{ponegnn}, and SDCGCL~\cite{sdcgcl} using the proposed metrics.
To provide a comprehensive benchmarking context, we also include four representative unsigned baselines, LightGCN~\cite{lightgcn}, GFormer~\cite{gformer}, LightGCL~\cite{lightgcl}, and XSimGCL~\cite{xsimgcl}.
All models utilize official implementations with recommended hyperparameters.

\begin{table}[h]
\caption{Dataset statistics. $\rho$ denotes the ratio of negative to positive interactions. Ratings $\geq 4$ are positive, $< 4$ are negative for rating-based datasets. ``Both P\&N'' is the fraction of test users with at least one positive and one negative test item.}
\label{tab:datasets}
\centering
\footnotesize
\begin{tabular}{lccccc}
\toprule
\textbf{Dataset} & \textbf{\#Users} & \textbf{\#Items} & \textbf{\#Inter.} & $\rho$ & \textbf{\makecell{Both\\P\&N}} \\
\midrule
Amazon-CDs & 51,267 & 46,464 & 895K & 0.22 & 29.2\% \\
Amazon-Music & 3,472 & 2,498 & 50K & 0.25 & 28.9\% \\
Epinions & 17,894 & 17,660 & 414K & 0.37 & 51.7\% \\
KuaiRand & 16,974 & 4,373 & 263K & 1.25 & 51.6\% \\
KuaiRec & 1,411 & 3,327 & 254K & 5.95 & 72.3\% \\
\bottomrule
\end{tabular}
\end{table}

\paragraph{Evaluation Protocol}
Datasets are split 7:1:2 (train:val:test) with 5-core filtering.
We report standard metrics, Recall@20, HR@20, and NDCG@20, alongside their signed counterparts, SRecall@20, SHR@20, and SNDCG@20, at $\gamma=1$.
All results represent the average of three random seeds where the ranking positions remain stable across all runs.
Since signed metrics reduce to conventional metrics when no negative items are available, we restrict our evaluation to users having at least one positive and one negative item.
This ``Both P\& N'' subpopulation ensures a mathematically rigorous comparison and we compute conventional metrics on the same user group to maintain consistency.

\subsection{Re-evaluation under Signed Metrics}
\label{sec:reeval}

\begin{table*}[p]
\caption{Re-evaluation across all datasets and models ($K=20$, $\gamma=1$). Methods are first split into \textbf{Unsigned} baselines and \textbf{Signed} (sign-aware) methods. Sign-aware methods are further grouped into \textbf{Arch.-driven} (embed sign in model structure with standard or auxiliary-only loss) and \textbf{Loss-driven} (modify the training objective). Within each group, methods are ordered chronologically. Mean\textsubscript{$\pm$std} over three seeds $\{42, 2024, 2025\}$. \textbf{Bold}: best, \underline{underline}: second best per column within each dataset. Significance markers on select SNDCG cells (paired $t$-test on user-level scores, Holm--Bonferroni corrected within each column at $p<0.05$): $\ddagger$ significantly better than all four unsigned baselines (sign-aware methods only); $\dagger$ significantly better than all other nine methods.}
\label{tab:reeval}
\centering
\footnotesize
\setlength{\tabcolsep}{3pt}
\renewcommand{\arraystretch}{0.9}
\begin{adjustbox}{max width=\textwidth, max totalheight=0.85\textheight}
\begin{tabular}{l ll l ccc ccc}
\toprule
& & & & \multicolumn{3}{c}{\textbf{Conventional}} & \multicolumn{3}{c}{\textbf{Signed} ($\gamma=1$)} \\
\cmidrule(lr){5-7} \cmidrule(lr){8-10}
\textbf{Dataset} & \multicolumn{2}{c}{\textbf{Group}} & \textbf{Method} & \textbf{Recall} & \textbf{HR} & \textbf{NDCG} & \textbf{SRecall} & \textbf{SHR} & \textbf{SNDCG} \\
\midrule
\multirow{10}{*}{\rotatebox{90}{Amazon-CDs}}
& \multirow{4}{*}{Unsigned} &
  & LightGCN    & .1351\textsubscript{$\pm$.0003} & .3357\textsubscript{$\pm$.0011} & .0857\textsubscript{$\pm$.0005} & .0338\textsubscript{$\pm$.0011} & .1664\textsubscript{$\pm$.0019} & .0337\textsubscript{$\pm$.0010} \\
& & & GFormer     & .0578\textsubscript{$\pm$.0002} & .1768\textsubscript{$\pm$.0007} & .0370\textsubscript{$\pm$.0002} & .0066\textsubscript{$\pm$.0001} & .0793\textsubscript{$\pm$.0003} & .0099\textsubscript{$\pm$.0002} \\
& & & LightGCL    & \underline{.1383\textsubscript{$\pm$.0004}} & \underline{.3449\textsubscript{$\pm$.0020}} & \underline{.0884\textsubscript{$\pm$.0002}} & .0329\textsubscript{$\pm$.0007} & .1691\textsubscript{$\pm$.0028} & .0339\textsubscript{$\pm$.0001} \\
& & & XSimGCL     & .1373\textsubscript{$\pm$.0006} & .3401\textsubscript{$\pm$.0010} & \textbf{.0886\textsubscript{$\pm$.0001}} & .0328\textsubscript{$\pm$.0009} & .1665\textsubscript{$\pm$.0022} & .0347\textsubscript{$\pm$.0005} \\
\cmidrule(lr){2-10}
& \multirow{6}{*}{Signed} & \multirow{3}{*}{Arch.}
  & SiGRec      & .0821\textsubscript{$\pm$.0037} & .2441\textsubscript{$\pm$.0067} & .0506\textsubscript{$\pm$.0024} & .0201\textsubscript{$\pm$.0003} & .1288\textsubscript{$\pm$.0036} & .0203\textsubscript{$\pm$.0011} \\
& & & NFARec      & \textbf{.1430\textsubscript{$\pm$.0010}} & \textbf{.3464\textsubscript{$\pm$.0025}} & .0840\textsubscript{$\pm$.0008} & .0415\textsubscript{$\pm$.0003} & .1802\textsubscript{$\pm$.0013} & .0363\textsubscript{$\pm$.0004} \\
& & & SIGformer   & .1348\textsubscript{$\pm$.0029} & .3396\textsubscript{$\pm$.0035} & .0865\textsubscript{$\pm$.0019} & .0363\textsubscript{$\pm$.0007} & .1750\textsubscript{$\pm$.0011} & .0360\textsubscript{$\pm$.0004} \\
\cmidrule(lr){3-10}
& & \multirow{3}{*}{Loss}
  & SiReN       & .1259\textsubscript{$\pm$.0036} & .3217\textsubscript{$\pm$.0058} & .0810\textsubscript{$\pm$.0028} & \textbf{.0658\textsubscript{$\pm$.0027}} & \textbf{.2193\textsubscript{$\pm$.0044}} & \textbf{.0526\textsubscript{$\pm$.0021}}$^{\ddagger\dagger}$ \\
& & & Pone-GNN    & .1042\textsubscript{$\pm$.0031} & .2855\textsubscript{$\pm$.0070} & .0641\textsubscript{$\pm$.0047} & \underline{.0585\textsubscript{$\pm$.0016}} & \underline{.2027\textsubscript{$\pm$.0042}} & \underline{.0429\textsubscript{$\pm$.0032}}$^\ddagger$ \\
& & & SDCGCL      & .0869\textsubscript{$\pm$.0299} & .2283\textsubscript{$\pm$.0728} & .0550\textsubscript{$\pm$.0218} & .0426\textsubscript{$\pm$.0119} & .1494\textsubscript{$\pm$.0445} & .0341\textsubscript{$\pm$.0125} \\
\midrule
\multirow{10}{*}{\rotatebox{90}{Amazon-Music}}
& \multirow{4}{*}{Unsigned} &
  & LightGCN    & .2870\textsubscript{$\pm$.0063} & .5293\textsubscript{$\pm$.0147} & .1767\textsubscript{$\pm$.0026} & .0641\textsubscript{$\pm$.0042} & .2340\textsubscript{$\pm$.0153} & .0671\textsubscript{$\pm$.0064} \\
& & & GFormer     & \textbf{.2907\textsubscript{$\pm$.0101}} & \textbf{.5359\textsubscript{$\pm$.0105}} & \textbf{.1860\textsubscript{$\pm$.0072}} & .0947\textsubscript{$\pm$.0235} & .2709\textsubscript{$\pm$.0273} & .0905\textsubscript{$\pm$.0134} \\
& & & LightGCL    & .2699\textsubscript{$\pm$.0083} & .5049\textsubscript{$\pm$.0037} & .1677\textsubscript{$\pm$.0029} & .0730\textsubscript{$\pm$.0089} & .2301\textsubscript{$\pm$.0113} & .0648\textsubscript{$\pm$.0041} \\
& & & XSimGCL     & \underline{.2878\textsubscript{$\pm$.0033}} & \underline{.5312\textsubscript{$\pm$.0034}} & \underline{.1789\textsubscript{$\pm$.0017}} & .0686\textsubscript{$\pm$.0007} & .2317\textsubscript{$\pm$.0047} & .0652\textsubscript{$\pm$.0011} \\
\cmidrule(lr){2-10}
& \multirow{6}{*}{Signed} & \multirow{3}{*}{Arch.}
  & SiGRec      & .1687\textsubscript{$\pm$.0552} & .3702\textsubscript{$\pm$.1090} & .0975\textsubscript{$\pm$.0367} & .0593\textsubscript{$\pm$.0112} & .2010\textsubscript{$\pm$.0251} & .0441\textsubscript{$\pm$.0070} \\
& & & NFARec      & .1861\textsubscript{$\pm$.0048} & .3954\textsubscript{$\pm$.0071} & .1026\textsubscript{$\pm$.0013} & .0343\textsubscript{$\pm$.0117} & .1731\textsubscript{$\pm$.0196} & .0292\textsubscript{$\pm$.0041} \\
& & & SIGformer   & .2812\textsubscript{$\pm$.0072} & .5243\textsubscript{$\pm$.0041} & .1766\textsubscript{$\pm$.0069} & .0664\textsubscript{$\pm$.0057} & .2321\textsubscript{$\pm$.0089} & .0699\textsubscript{$\pm$.0062} \\
\cmidrule(lr){3-10}
& & \multirow{3}{*}{Loss}
  & SiReN       & .2686\textsubscript{$\pm$.0110} & .5064\textsubscript{$\pm$.0111} & .1683\textsubscript{$\pm$.0041} & .1038\textsubscript{$\pm$.0070} & .2763\textsubscript{$\pm$.0126} & .0953\textsubscript{$\pm$.0014} \\
& & & Pone-GNN    & .2534\textsubscript{$\pm$.0045} & .5014\textsubscript{$\pm$.0036} & .1566\textsubscript{$\pm$.0055} & \textbf{.1250\textsubscript{$\pm$.0008}} & \textbf{.3166\textsubscript{$\pm$.0040}} & \underline{.0990\textsubscript{$\pm$.0037}} \\
& & & SDCGCL      & .2541\textsubscript{$\pm$.0056} & .4889\textsubscript{$\pm$.0012} & .1642\textsubscript{$\pm$.0007} & \underline{.1146\textsubscript{$\pm$.0050}} & \underline{.3007\textsubscript{$\pm$.0052}} & \textbf{.0993\textsubscript{$\pm$.0023}} \\
\midrule
\multirow{10}{*}{\rotatebox{90}{Epinions}}
& \multirow{4}{*}{Unsigned} &
  & LightGCN    & .0856\textsubscript{$\pm$.0014} & .2885\textsubscript{$\pm$.0078} & .0566\textsubscript{$\pm$.0004} & .0122\textsubscript{$\pm$.0008} & .1135\textsubscript{$\pm$.0034} & .0147\textsubscript{$\pm$.0012} \\
& & & GFormer     & .0778\textsubscript{$\pm$.0006} & .2704\textsubscript{$\pm$.0011} & .0499\textsubscript{$\pm$.0004} & .0040\textsubscript{$\pm$.0006} & .0868\textsubscript{$\pm$.0015} & .0062\textsubscript{$\pm$.0005} \\
& & & LightGCL    & .0897\textsubscript{$\pm$.0003} & .2966\textsubscript{$\pm$.0031} & .0610\textsubscript{$\pm$.0007} & .0155\textsubscript{$\pm$.0021} & .1259\textsubscript{$\pm$.0055} & .0190\textsubscript{$\pm$.0006} \\
& & & XSimGCL     & \textbf{.0972\textsubscript{$\pm$.0006}} & \textbf{.3138\textsubscript{$\pm$.0004}} & \textbf{.0658\textsubscript{$\pm$.0005}} & .0134\textsubscript{$\pm$.0020} & .1236\textsubscript{$\pm$.0028} & .0178\textsubscript{$\pm$.0006} \\
\cmidrule(lr){2-10}
& \multirow{6}{*}{Signed} & \multirow{3}{*}{Arch.}
  & SiGRec      & .0633\textsubscript{$\pm$.0009} & .2370\textsubscript{$\pm$.0017} & .0410\textsubscript{$\pm$.0004} & .0040\textsubscript{$\pm$.0013} & .0875\textsubscript{$\pm$.0026} & .0086\textsubscript{$\pm$.0012} \\
& & & NFARec      & .0853\textsubscript{$\pm$.0022} & .2848\textsubscript{$\pm$.0028} & .0501\textsubscript{$\pm$.0004} & .0012\textsubscript{$\pm$.0007} & .0830\textsubscript{$\pm$.0033} & .0053\textsubscript{$\pm$.0009} \\
& & & SIGformer   & \underline{.0935\textsubscript{$\pm$.0019}} & \underline{.3088\textsubscript{$\pm$.0031}} & \underline{.0615\textsubscript{$\pm$.0023}} & .0102\textsubscript{$\pm$.0018} & .1143\textsubscript{$\pm$.0094} & .0146\textsubscript{$\pm$.0035} \\
\cmidrule(lr){3-10}
& & \multirow{3}{*}{Loss}
  & SiReN       & .0755\textsubscript{$\pm$.0013} & .2556\textsubscript{$\pm$.0011} & .0491\textsubscript{$\pm$.0009} & .0357\textsubscript{$\pm$.0026} & .1637\textsubscript{$\pm$.0033} & .0291\textsubscript{$\pm$.0010} \\
& & & Pone-GNN    & .0809\textsubscript{$\pm$.0006} & .2735\textsubscript{$\pm$.0019} & .0540\textsubscript{$\pm$.0005} & \underline{.0393\textsubscript{$\pm$.0012}} & \underline{.1778\textsubscript{$\pm$.0029}} & \underline{.0337\textsubscript{$\pm$.0002}} \\
& & & SDCGCL      & .0787\textsubscript{$\pm$.0084} & .2641\textsubscript{$\pm$.0264} & .0534\textsubscript{$\pm$.0058} & \textbf{.0433\textsubscript{$\pm$.0039}} & \textbf{.1806\textsubscript{$\pm$.0174}} & \textbf{.0347\textsubscript{$\pm$.0037}} \\
\midrule
\multirow{10}{*}{\rotatebox{90}{KuaiRand}}
& \multirow{4}{*}{Unsigned} &
  & LightGCN    & .1200\textsubscript{$\pm$.0166} & .2202\textsubscript{$\pm$.0236} & .0586\textsubscript{$\pm$.0080} & $-$.0341\textsubscript{$\pm$.0009} & .0041\textsubscript{$\pm$.0036} & $-$.0101\textsubscript{$\pm$.0013} \\
& & & GFormer     & \textbf{.1442\textsubscript{$\pm$.0006}} & \textbf{.2563\textsubscript{$\pm$.0013}} & \textbf{.0718\textsubscript{$\pm$.0003}}$^\dagger$ & $-$.0352\textsubscript{$\pm$.0025} & .0111\textsubscript{$\pm$.0017} & $-$.0082\textsubscript{$\pm$.0010} \\
& & & LightGCL    & .1033\textsubscript{$\pm$.0115} & .1916\textsubscript{$\pm$.0167} & .0507\textsubscript{$\pm$.0069} & $-$.0322\textsubscript{$\pm$.0045} & .0004\textsubscript{$\pm$.0006} & $-$.0112\textsubscript{$\pm$.0028} \\
& & & XSimGCL     & .1346\textsubscript{$\pm$.0013} & .2414\textsubscript{$\pm$.0038} & .0662\textsubscript{$\pm$.0008} & $-$.0371\textsubscript{$\pm$.0012} & .0089\textsubscript{$\pm$.0012} & $-$.0115\textsubscript{$\pm$.0009} \\
\cmidrule(lr){2-10}
& \multirow{6}{*}{Signed} & \multirow{3}{*}{Arch.}
  & SiGRec      & .1076\textsubscript{$\pm$.0035} & .1995\textsubscript{$\pm$.0051} & .0495\textsubscript{$\pm$.0024} & $-$.0371\textsubscript{$\pm$.0035} & $-$.0042\textsubscript{$\pm$.0048} & $-$.0143\textsubscript{$\pm$.0016} \\
& & & NFARec      & .1013\textsubscript{$\pm$.0002} & .1896\textsubscript{$\pm$.0012} & .0463\textsubscript{$\pm$.0006} & $-$.0365\textsubscript{$\pm$.0030} & $-$.0028\textsubscript{$\pm$.0013} & $-$.0157\textsubscript{$\pm$.0005} \\
& & & SIGformer   & \underline{.1393\textsubscript{$\pm$.0016}} & \underline{.2483\textsubscript{$\pm$.0039}} & \underline{.0680\textsubscript{$\pm$.0014}} & $-$.0331\textsubscript{$\pm$.0013} & .0128\textsubscript{$\pm$.0008} & $-$.0075\textsubscript{$\pm$.0011} \\
\cmidrule(lr){3-10}
& & \multirow{3}{*}{Loss}
  & SiReN       & .0785\textsubscript{$\pm$.0010} & .1514\textsubscript{$\pm$.0014} & .0374\textsubscript{$\pm$.0007} & \textbf{.0027\textsubscript{$\pm$.0026}} & \textbf{.0400\textsubscript{$\pm$.0046}} & \underline{.0039\textsubscript{$\pm$.0010}}$^\ddagger$ \\
& & & Pone-GNN    & .1045\textsubscript{$\pm$.0045} & .1980\textsubscript{$\pm$.0072} & .0516\textsubscript{$\pm$.0025} & $-$.0054\textsubscript{$\pm$.0007} & \underline{.0393\textsubscript{$\pm$.0017}} & \textbf{.0045\textsubscript{$\pm$.0009}}$^\ddagger$ \\
& & & SDCGCL      & .0832\textsubscript{$\pm$.0094} & .1591\textsubscript{$\pm$.0172} & .0428\textsubscript{$\pm$.0041} & \underline{$-$.0028\textsubscript{$\pm$.0035}} & .0306\textsubscript{$\pm$.0084} & \underline{.0039\textsubscript{$\pm$.0018}}$^\ddagger$ \\
\midrule
\multirow{10}{*}{\rotatebox{90}{KuaiRec}}
& \multirow{4}{*}{Unsigned} &
  & LightGCN    & .0026\textsubscript{$\pm$.0012} & .0134\textsubscript{$\pm$.0147} & .0015\textsubscript{$\pm$.0012} & $-$4.177\textsubscript{$\pm$.3370} & $-$.9918\textsubscript{$\pm$.0142} & $-$1.828\textsubscript{$\pm$.1845} \\
& & & GFormer     & .0019\textsubscript{$\pm$.0000} & .0046\textsubscript{$\pm$.0006} & .0007\textsubscript{$\pm$.0001} & $-$4.372\textsubscript{$\pm$.0028} & $-$1.000 & $-$1.931\textsubscript{$\pm$.0067} \\
& & & LightGCL    & .0017\textsubscript{$\pm$.0002} & .0046\textsubscript{$\pm$.0006} & .0009\textsubscript{$\pm$.0003} & $-$4.057\textsubscript{$\pm$.0169} & $-$.9990\textsubscript{$\pm$.0010} & $-$1.781\textsubscript{$\pm$.0064} \\
& & & XSimGCL     & .0134\textsubscript{$\pm$.0030} & .0683\textsubscript{$\pm$.0032} & .0083\textsubscript{$\pm$.0006} & $-$.8810\textsubscript{$\pm$.1280} & $-$.4608\textsubscript{$\pm$.0148} & $-$.3647\textsubscript{$\pm$.0500} \\
\cmidrule(lr){2-10}
& \multirow{6}{*}{Signed} & \multirow{3}{*}{Arch.}
  & SiGRec      & \underline{.0669\textsubscript{$\pm$.0018}} & \underline{.2428\textsubscript{$\pm$.0023}} & \underline{.0412\textsubscript{$\pm$.0011}}$^\ddagger$ & $-$.1615\textsubscript{$\pm$.0151} & .1497\textsubscript{$\pm$.0074} & $-$.0569\textsubscript{$\pm$.0053}$^\ddagger$ \\
& & & NFARec      & .0019\textsubscript{$\pm$.0000} & .0049\textsubscript{$\pm$.0010} & .0007\textsubscript{$\pm$.0001} & $-$4.221\textsubscript{$\pm$.0887} & $-$.9977\textsubscript{$\pm$.0032} & $-$1.622\textsubscript{$\pm$.1021} \\
& & & SIGformer$^\ast$ & \textbf{.0830\textsubscript{$\pm$.0052}} & \textbf{.2676\textsubscript{$\pm$.0069}} & \textbf{.0496\textsubscript{$\pm$.0014}}$^{\ddagger\dagger}$ & \textbf{.0632\textsubscript{$\pm$.0048}} & \textbf{.2314\textsubscript{$\pm$.0055}} & \textbf{.0418\textsubscript{$\pm$.0015}}$^{\ddagger\dagger}$ \\
\cmidrule(lr){3-10}
& & \multirow{3}{*}{Loss}
  & SiReN       & .0249\textsubscript{$\pm$.0347} & .0899\textsubscript{$\pm$.1269} & .0153\textsubscript{$\pm$.0225}$^\ddagger$ & $-$3.303\textsubscript{$\pm$1.369} & $-$.7827\textsubscript{$\pm$.3764} & $-$1.345\textsubscript{$\pm$.6819} \\
& & & Pone-GNN    & .0515\textsubscript{$\pm$.0162} & .1974\textsubscript{$\pm$.0308} & .0314\textsubscript{$\pm$.0062}$^\ddagger$ & $-$2.146\textsubscript{$\pm$1.094} & $-$.5915\textsubscript{$\pm$.1480} & $-$.6516\textsubscript{$\pm$.3664} \\
& & & SDCGCL      & .0614\textsubscript{$\pm$.0019} & .2255\textsubscript{$\pm$.0010} & .0382\textsubscript{$\pm$.0008}$^\ddagger$ & \underline{.0270\textsubscript{$\pm$.0050}} & \underline{.1572\textsubscript{$\pm$.0099}} & \underline{.0243\textsubscript{$\pm$.0037}}$^\ddagger$ \\
\bottomrule
\multicolumn{10}{l}{\footnotesize $^\ast$ SIGformer uses $\beta = -0.2$ on KuaiRec, which reverses the negative feedback direction.}
\end{tabular}
\end{adjustbox}
\end{table*}

We now systematically re-evaluate the state-of-the-art models using our proposed signed metrics.
Table \ref{tab:reeval} presents the comprehensive evaluation results across all five benchmark datasets.

First, the top-ranked method fundamentally differs between conventional and signed metrics.
On Amazon-CDs, the unsigned baseline XSimGCL (.0886) holds the top NDCG position but under SNDCG the top position shifts to the sign-aware method, SiReN (.0526), with other \emph{Loss-driven} sign-aware methods re-ordering accordingly to secure the top ranks.
This pattern repeats on KuaiRand, where the NDCG-best unsigned method GFormer (.0718) achieves a negative SNDCG score ($-$.0082), indicating that it actively ranks disliked items above liked items.
Meanwhile, Pone-GNN, another \emph{Loss-driven} method, captures the top positive SNDCG position.
Notably, on KuaiRec, where negative interactions dominates, the alignment between relevance and valence objectives shifts, and Architecture-driven methods achieve the highest SNDCG, while Loss-driven methods collapse to negative scores.

Second, conventional metrics consistently fail to identify models that successfully leverage negative feedback.
On datasets with high-$\rho$, unsigned methods with near zero NDCG exhibit negative SNDCG scores such as GFormer ($-$1.931).
This reveals that their poor conventional performance actually masks a catastrophic failure rather than mild inaccuracy.
Conversely, on low-to-mid-$\rho$ datasets, sign-aware methods with better valence separation dominate the rankings under SNDCG, with Loss-driven methods consistently securing the top positions.
This confirms that negative feedback training provides significant utility that remains largely invisible under conventional relevance-only evaluation.

Finally, we verify that the observed ranking shifts are statistically robust and not mere noise.
As indicated by the markers in Table \ref{tab:reeval}, we perform paired t-tests on user-level scores with Holm-Bonferroni correction ($\alpha=0.05$).
The top-performing sign-aware methods consistently carry $\ddagger$ markers, indicating that they significantly outperform every unsigned baseline under SNDCG.
Notably, these methods also achieve $\dagger$ markers, outperforming all nine competing models.
The significance structure aligns with the NDCG-to-SNDCG re-ordering across all datasets, whereas unsigned methods often lose their statistical dominance under our metrics.
Because our comparisons are paired at the user level, these conclusions remain highly robust against between user heterogeneity, validating the critical need for our proposed evaluation metrics.

\subsection{Proof of Concept}
To verify the practical utility of our proposed metrics, we apply a minimal auxiliary loss, $\mathcal{L}_\mathrm{sign}$, to one baseline as a controlled test rather than a method proposal.
Specifically, $\mathcal{L}_\mathrm{sign}$ trains a linear classifier on the element-wise embedding product $\mathbf{e}_u\odot\mathbf{e}_i$ (which yields a gradient component orthogonal to $\mathbf{e}_u$; Proposition~\ref{prop:sign_decoupling}) to predict whether an interaction is positive or negative:
\begin{equation}
    \mathcal{L}_\mathrm{sign}=
    -\frac{1}{|\mathcal{B}|}\sum_{(u,i,y)\in\mathcal{B}}\left[y\log\hat{p}_{ui}+(1-y)\log(1-\hat{p}_{ui})\right],
\end{equation}
where $y\in\{0,1\}$ indicates positive $(y=1)$ or negative $(y=0)$ feedback, $\hat{p}_{ui}=\sigma(\mathbf{w}^\top(\mathbf{e}_u\odot\mathbf{e}_i)+b)$, and $\mathbf{w}\in\mathbb{R}^d, b\in\mathbb{R}$ are learnable.
We apply this to the SIGformer with the combined objective $\mathcal{L}=\mathcal{L}_\mathrm{base}+\lambda_\mathrm{sign}\cdot\mathcal{L}_\mathrm{sign}$, where $\mathcal{L}_\mathrm{base}$ is SIGformer's signed BPR loss.
At inference, the classifier output is additionally incorporated into the ranking score as $\hat{y}_{ui}^{\text{final}}=\hat{y}_{ui}+\lambda_{\text{inf}}\cdot(2\hat{p}_{ui}-1)$, where $\hat{y}_{ui}=\mathbf{e}_u^\top\mathbf{e}_i$ is the original relevance score, $(2\hat{p}_{ui}-1)\in[-1, +1]$ acts as a valence adjustment scaled by $\lambda_\mathrm{inf}\geq0$, and $\lambda_{\text{inf}}=0$ recovers the original model.
We search over $\lambda_\mathrm{sign}$ and $\lambda_\mathrm{inf}$ and select per-dataset the configuration that maximizes SNDCG while keeping NDCG within 5\% of the baseline (details in Appendix~\ref{app:impl_lsign}).

As shown in Table \ref{tab:lsign}, $\mathcal{L}_\mathrm{sign}$ improves signed metrics on four of five datasets while preserving conventional relevance.
SNDCG rises by 26\% on Amazon-CDs, 46\% on Amazon-Music, and nearly triples on Epinions (+187\%).
The baseline SNDCG is negative ($-.0070$) on KuaiRand, meaning the model actively ranks disliked items above liked ones for the average user.
$\mathcal{L}_\mathrm{sign}$ reverses this to a positive value at only 3.9\% relevance cost.
Across all five datasets, $\Delta\text{Pr.}$ is consistently positive, confirming that $\mathcal{L}_\mathrm{sign}$ successfully injects valence information into the embedding space.
Together, these results confirm that the proposed signed metrics provide actionable training signals and that models can be explicitly guided toward valence-aware behavior without sacrificing conventional retrieval performance.

KuaiRec is the exception, as the baseline already achieves strong valence separation and no configuration satisfies the 5\% NDCG constraint while improving SNDCG.
This outcome is informative rather than problematic.
It demonstrates that signed metrics do not blindly reward valence separation, but jointly evaluate relevance and valence, penalizing configurations that sacrifice relevance for valence.

\begin{table}[h!]
\caption{Effect of $\mathcal{L}_\mathrm{sign}$ applied to SIGformer ($K=20$, $\gamma=1$). Per-dataset configurations selected for best SNDCG while keeping NDCG within 5\% of the baseline; values in parentheses denote ($\lambda_\text{sign}$, $\lambda_\text{inf}$). Evaluated on users with both positive and negative test items. \textbf{Bold}: better of the two rows per metric.}
\label{tab:lsign}
\footnotesize
\centering
\renewcommand{\arraystretch}{0.90}
\begin{tabular}{ll ccccc}
\toprule
\textbf{Dataset} & \textbf{Model} & \textbf{NDCG} & \textbf{SRecall} & \textbf{SHR} & \textbf{SNDCG} & $\Delta$\textbf{Pr.} \\
\midrule
\multirow{2}{*}{Amazon-CDs}
& SIGformer                         & .0842 & .0359 & .1507 & .0355 & -- \\
& SIGformer $+\mathcal{L}_\text{sign}$ (.01, 2) & .0829 & \textbf{.0520} & \textbf{.1798} & \textbf{.0447} & +.067 \\
\midrule
\multirow{2}{*}{Amazon-Music}
& SIGformer                         & .1674 & .0595 & .1676 & .0615 & -- \\
& SIGformer $+\mathcal{L}_\text{sign}$ (.05, 2) & .1653 & \textbf{.1036} & \textbf{.2258} & \textbf{.0900} & +.052 \\
\midrule
\multirow{2}{*}{Epinions}
& SIGformer                         & .0587 & .0081 & .0919 & .0104 & -- \\
& SIGformer $+\mathcal{L}_\text{sign}$ (.1, 1)  & .0595 & \textbf{.0326} & \textbf{.1602} & \textbf{.0299} & +.131 \\
\midrule
\multirow{2}{*}{KuaiRand}
& SIGformer                         & .0684 & $-$.0319 & .0141 & $-$.0070 & -- \\
& SIGformer $+\mathcal{L}_\text{sign}$ (.1, 2)  & .0657 & \textbf{$-$.0035} & \textbf{.0519} & \textbf{.0071} & +.043 \\
\midrule
\multirow{2}{*}{KuaiRec}
& SIGformer$^*$                     & .0481 & \textbf{.0582} & \textbf{.2255} & \textbf{.0401} & -- \\
& SIGformer $+\mathcal{L}_\text{sign}$ (.01, 5) & .0338 & .0444 & .1775 & .0271 & +.038 \\
\bottomrule
\multicolumn{7}{l}{\footnotesize $^*$ On KuaiRec, no $+\mathcal{L}_\text{sign}$ config meets the 5\% NDCG constraint.}
\end{tabular}%
\end{table}

%% file: sections/7_conclusion.tex
\section{Conclusion and Limitations}
\label{sec:conclusion}

In this work, we identify that sign-aware recommender systems paradoxically fail to separate positive from negative items, and re-evaluate existing methods using signed metrics that assign negative utility to disliked items in the top-$K$ list.
While these metrics provide the societal benefit of reducing exposure to disliked content, potential over-optimization may suppress content exploration.

However, our analysis focuses on graph-based models using inner-product scoring leaves generalization to sequential or LLM-based recommender systems for future work.
Furthermore, our proof-of-concept auxiliary loss ($\mathcal{L}_\mathrm{sign}$) reveals that enforcing valence separation in dense negative environments can conflict with performance preservation.
Ultimately, operating on the principle that \emph{what gets measured gets managed}, we hope this work encourages the community to evaluate recommender systems not only for what they retrieve, but also for what they must avoid.

%% file: sections/8_appendix.tex
\appendix

\section{The Evaluation Gap}
\label{app:related}

This appendix expands on the position of signed metrics within the broader evaluation-methodology literature referenced in \S\ref{sec:related}. Evaluation methodology has been extensively studied~\cite{herlocker2004evaluating, gunawardana2009survey, bauer2024landscape, mcnee2006being}, spurring research into beyond-accuracy metrics~\cite{ge2010beyond, vargas2011rank, kaminskas2016diversity} and critical examinations of protocol biases~\cite{krichene2020sampled, rendle2019evaluation, zhao2022revisiting, steck2013evaluation}. Standard ranking metrics derive from a graded-relevance framework~\cite{jarvelin2002cumulated} that, with binary labels, assigns identical scores to negative and unobserved items. None of these advances address the treatment of \emph{explicit} negatives: a design predating sign-aware training that has not kept pace with it.

\textbf{Position relative to IR metric tradition.} Information retrieval has a long tradition of metrics beyond binary relevance: graded-relevance NDCG~\cite{jarvelin2002cumulated, kekalainen2005binary}, the cascade-based ERR~\cite{chapelle2009err}, and user-model metrics such as RBP~\cite{moffat2008rbp}. These all assume non-negative relevance grades and cannot distinguish rejected items from unseen ones. Our signed metrics differ in two ways: they admit \emph{negative} relevance values encoding explicit rejection (not merely lower positive grades), which requires a modified IDCG that places positives first and negatives last so that the metric remains bounded as $|\mathcal{N}_u|$ grows; and the $\gamma\!=\!0$ limit recovers standard binary-relevance NDCG (Proposition~\ref{prop:characterization}(i)), enabling drop-in replacement rather than a separate evaluation track. Signed metrics are thus a specialization of the graded-relevance tradition to the signed-feedback setting, not a departure.

Several neighboring evaluation traditions might appear to address this gap but each captures an orthogonal property: graded relevance~\cite{jarvelin2002cumulated} encodes \emph{degrees} of preference rather than the liked/disliked distinction; beyond-accuracy metrics (diversity, serendipity~\cite{ge2010beyond, kaminskas2016diversity}) evaluate set-level properties; calibrated recommendations~\cite{steck2018calibrated} match category proportions without enforcing within-list ordering; fairness-of-exposure~\cite{singh2018fairness} partitions items by protected attributes, not user-revealed valence; and debiased evaluation~\cite{yang2018unbiased, saito2020unbiased} corrects which items are observed but leaves how they are scored unchanged. None operationalize the qualitative positive/negative distinction at the list level.

\section{Theoretical Properties of Signed Metrics}
\label{app:theory}

This appendix establishes two formal results for the signed metrics introduced in \S\ref{sec:signed_metrics}. Proposition~\ref{prop:characterization} (\S\ref{app:theory_char}) characterizes signed metrics in relation to their conventional counterparts; this provides the formal grounding for the $\gamma$-sensitivity analysis in Appendix~\ref{app:gamma} and explains the ranking changes observed in the re-evaluation of \S\ref{sec:reeval}. Corollary~\ref{cor:monotonicity} (\S\ref{app:theory_mono}) further establishes that the dependence of any signed metric on $\gamma$ is affine, licensing $\gamma$ as an evaluation-time \emph{dial} rather than a fixed hyperparameter.

For convenience, we recall the binary relevance indicator implicit in the conventional metrics of \S\ref{sec:prelim}:
\begin{equation}
\label{eq:binary_rel}
\mathrm{rel}_u(i) = \mathbb{I}[i \in \mathcal{P}_u],
\end{equation}
which assigns identical scores to negative and unobserved items, motivating the signed extension $\mathrm{rel}_s$ of \S\ref{sec:signed_metrics}.

\textbf{A note on notation.} Throughout this appendix, $\sigma$ denotes a permutation of a ranking. The same symbol $\sigma(\cdot)$ is reused in Appendix~\ref{app:gradient} for the sigmoid function, as is standard in that context; the meaning is unambiguous from context.

\subsection{Characterization}
\label{app:theory_char}

\begin{proposition}[Characterization of Signed Metrics]
\label{prop:characterization}
Let $\pi$ be a ranking of the candidate items, evaluated at cutoff $K$, and let $\sigma(\pi)$ denote any ranking obtained by permuting negative and unobserved items in $\pi$ while keeping positive items fixed. For each pair of conventional and signed metrics $(M, M_s) \in \{(\text{NDCG}, \text{SNDCG}), (\text{Recall}, \text{SRecall}), (\text{HR}, \text{SHR})\}$ at cutoff $K$:

\emph{(i) Backward compatibility.} $M_s = M$ when $\gamma = 0$.

\emph{(ii) Valence blindness of $M$.} $M(\pi) = M(\sigma(\pi))$.

\emph{(iii) Valence sensitivity of $M_s$.} For $\gamma > 0$, suppose $\sigma$ is the transposition that swaps a negative item $n$ at rank $k \leq K$ of $\pi$ with an unobserved item at rank $k' > k$. Then $M_s(\sigma(\pi)) > M_s(\pi)$ holds: for SNDCG, with no further condition; for SRecall, when $k' > K$ (the negative leaves the top-$K$); for SHR, when $k' > K$ and $n$ is the unique negative item in $\mathcal{N}_u \cap \pi_K$.
\end{proposition}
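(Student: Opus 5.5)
The plan is to reduce all three metrics to functions of the per-position signed relevance vector $(\mathrm{rel}_s(\pi[1]),\dots,\mathrm{rel}_s(\pi[K]))$. The key bookkeeping fact concerns the normalizers. For SRecall it is $|\mathcal{P}_u|$. For SNDCG it is $\mathrm{ISDCG}@K$, which is defined from the label sets $(\mathcal{P}_u,\mathcal{N}_u)$ alone: positives first, negatives last. Neither normalizer depends on $\pi$, so both are unchanged by any permutation $\sigma$. We also need $\mathrm{ISDCG}@K>0$, which holds because we work on the Both P\&N population with $\mathcal{P}_u\neq\emptyset$. If negatives could intrude into the ideal top-$K$, the prescribed ideal arrangement still places positives at ranks $1,\dots,\min(K,|\mathcal{P}_u|)$ first, so the sum is dominated by positive terms. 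With this fixed, each part is a short direct computation.

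\emph{(i) Backward compatibility.} Setting $\gamma=0$ gives $\mathrm{rel}_s=\mathrm{rel}_u$ pointwise (Eq.~\ref{eq:binary_rel}). It follows that $\mathrm{SDCG}=\mathrm{DCG}$ and $\mathrm{ISDCG}=\mathrm{IDCG}$, since with zero penalty the ideal signed list has the same positive placement as the ideal binary list. The $\gamma$-terms in SRecall and SHR vanish identically.

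\emph{(ii) Valence blindness of $M$.} Since $\sigma$ fixes every positive item's rank, the set $\mathcal{P}_u\cap\pi_K$ and the ranks of its elements are unchanged. Recall, HR and DCG depend only on these quantities, and IDCG depends only on $|\mathcal{P}_u|$.

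\emph{(iii) Valence sensitivity of $M_s$.} Let $\sigma$ swap a negative $n$ at rank $k\le K$ with an unobserved item $j$ at rank $k'>k$.
\begin{itemize}
\item \emph{SNDCG.} If $k'\le K$, the change in SDCG is $\gamma\bigl(\frac{1}{\log_2(k+1)}-\frac{1}{\log_2(k'+1)}\bigr)>0$, by strict monotonicity of $\log_2$. If $k'>K$, the change is $\gamma/\log_2(k+1)>0$. Dividing by the fixed positive ISDCG preserves the strict inequality.
\item \emph{SRecall.} When $k'>K$, the count $|\mathcal{N}_u\cap\pi_K|$ drops by exactly one and the positive count is unchanged. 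The metric therefore increases by $\gamma/|\mathcal{P}_u|>0$.
\item \emph{SHR.} When $k'>K$ and $n$ is the unique negative in the top-$K$, the indicator $\mathbb{I}[|\mathcal{N}_u\cap\pi_K|>0]$ goes from $1$ to $0$. The positive indicator is unchanged, so SHR rises by $\gamma$.
\end{itemize}
I will also note briefly why the side conditions are needed. For $k'\le K$, SRecall and SHR are set-based and therefore rank-insensitive inside the top-$K$. For SHR, a remaining negative keeps the indicator at $1$.

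The main obstacle is not any calculation but making sure the normalizers are genuinely $\pi$-independent and positive. For SNDCG, strictness relies entirely on ISDCG being a fixed positive constant. I would first state explicitly that ISDCG is computed from $(\mathcal{P}_u,\mathcal{N}_u)$ and not from $\pi$. I would then verify its positivity under $\mathcal{P}_u\neq\emptyset$, including the edge case where negatives enter the ideal top-$K$ only after all positives. Everything else is routine.
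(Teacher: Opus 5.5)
Your treatment of (i)--(iii) follows the paper's proof essentially step for step. You use the same SDCG increment, $\gamma(1/\log_2(k+1)-1/\log_2(k'+1))$, or $\gamma/\log_2(k+1)$ when $k'>K$. You use the same increments $\gamma/|\mathcal{P}_u|$ and $\gamma$ for SRecall and SHR, and the same observation that the normalizers do not depend on $\pi$.

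\textbf{The gap.} The step that fails is the one you single out as the main obstacle. Positivity of $\mathrm{ISDCG}@K$ does not follow from $\mathcal{P}_u\neq\emptyset$ once negatives can enter the ideal top-$K$, and ``the sum is dominated by positive terms'' is false. A counterexample:
\begin{itemize}
\item Take $K=5$, $\gamma=1$, and candidate pool $\{A,j,n_1,n_2,n_3\}$, with $\mathcal{P}_u=\{A\}$, $j$ unobserved and $\mathcal{N}_u=\{n_1,n_2,n_3\}$.
\item The ideal list is $(A,j,n_1,n_2,n_3)$, so $\mathrm{ISDCG}@5=1-(1/\log_2 4+1/\log_2 5+1/\log_2 6)\approx-0.32$.
\item Let $\pi=(A,n_1,j,n_2,n_3)$ and swap $n_1$ (rank $2$) with $j$ (rank $3$). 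SDCG rises from about $-0.45$ to about $-0.32$, exactly as your computation predicts.
\item But SNDCG falls from about $1.41$ to $1$.
\end{itemize}
So in this regime part (iii) for SNDCG is false, not merely unproved, and no argument from $\mathcal{P}_u\neq\emptyset$ can rescue it.

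\textbf{The fix.} Positivity must enter as a hypothesis rather than be derived. One option is to require $|\mathcal{P}_u|+|\mathcal{U}_u|\geq K$ within the candidate pool. Then no negative reaches the ideal top-$K$, and $\mathrm{ISDCG}@K=\mathrm{IDCG}@K\geq 1$ for every $\gamma$, which also makes your identity in (i) immediate. The other option is simply to assume $\mathrm{ISDCG}@K>0$.

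The paper's proof makes the same tacit assumption when it concludes, from ``$\mathrm{IDCG}_s@K$ is unchanged'', that SNDCG strictly increases. With that hypothesis stated explicitly, your proof is complete.
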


Parts~(ii) and~(iii) together establish that signed metrics resolve the exact limitation of conventional metrics. Part~(iii) is essentially an \emph{adversarial-improvement test}: any swap that demotes a negative item below an unobserved one strictly improves SNDCG, and improves SRecall and SHR under their respective metric-specific conditions, even though the conventional score is unchanged by the same swap (part~(ii)). The metric-specific conditions reflect the coarseness of set-based metrics: SRecall and SHR cannot register within-top-$K$ position changes, and SHR additionally cannot register changes when other negatives remain in the top-$K$. Part~(i) ensures that adopting signed metrics is cost-free: existing pipelines recover conventional values by setting $\gamma = 0$.

\textbf{Concrete illustration.} Consider $K\!=\!4$, $\mathcal{P}_u\!=\!\{A\}$, $\mathcal{N}_u\!=\!\{B\}$, and unobserved items $C, D, E$. Compare two top-$4$ rankings $\pi = (A, B, C, D)$ and $\sigma(\pi) = (A, D, C, B)$: both contain the same items in the top-$4$ with positive $A$ at rank~$1$, and $\sigma$ is the transposition that swaps negative $B$ (at rank~$2$ of $\pi$) with unobserved $D$ (at rank~$4$ of $\pi$). Conventional NDCG@$4$, Recall@$4$, and HR@$4$ assign identical scores to $\pi$ and $\sigma(\pi)$ (part~(ii)). SNDCG@$4$, however, strictly prefers $\sigma(\pi)$ since the negative penalty $-\gamma$ is weighted at $1/\log_2(5)$ instead of the larger $1/\log_2(3)$ (part~(iii)). SRecall@$4$ and SHR@$4$ remain unchanged in this example because the swap is within the top-$4$ ($k'=K=4$), consistent with the metric-specific conditions in part~(iii).

\begin{proof}[Proof of Proposition~\ref{prop:characterization}]
We prove each part in turn.

\emph{(i) Backward compatibility.} Setting $\gamma = 0$ in the signed relevance $\mathrm{rel}_s$ defined in \S\ref{sec:signed_metrics} yields $\textit{rel}_s(i) = 1$ if $i \in \mathcal{P}_u$ and $\textit{rel}_s(i) = 0$ otherwise, identical to the binary relevance Eq.~\ref{eq:binary_rel}. All subsequent computations are therefore identical, so $M_s = M$ for each pair.

\emph{(ii) Valence blindness.} Under binary relevance, $\text{rel}_u(i) = 0$ for all $i \in \mathcal{N}_u \cup \mathcal{U}_u$. For \textbf{NDCG@}$K$: permuting such items does not change DCG@$K$ since their contributions are identically zero; IDCG@$K$ depends only on $|\mathcal{P}_u|$ and $K$, unchanged by the permutation. For \textbf{Recall@}$K$ and \textbf{HR@}$K$: both depend only on $|\mathcal{P}_u \cap \pi_K|$, which is preserved since positive items are held fixed.

\emph{(iii) Valence sensitivity.} Let $n \in \mathcal{N}_u$ be the negative item moved from rank $k \leq K$ to rank $k' > k$, and let $j \in \mathcal{U}_u$ be the unobserved item that takes its place at rank $k$. We argue separately for each metric.

\textbf{SNDCG@$K$ with $k' \leq K$.} The SDCG changes by
\begin{equation}
\Delta = \frac{\textit{rel}_s(j) - \textit{rel}_s(n)}{\log_2(k+1)} + \frac{\textit{rel}_s(n) - \textit{rel}_s(j)}{\log_2(k'+1)}.
\end{equation}
Since $\textit{rel}_s(n) = -\gamma < 0 \leq \textit{rel}_s(j)$, we have $\textit{rel}_s(j) - \textit{rel}_s(n) > 0$. Combined with $\tfrac{1}{\log_2(k+1)} > \tfrac{1}{\log_2(k'+1)}$ for $k < k'$, this gives $\Delta > 0$.

\textbf{SNDCG@$K$ with $k' > K$.} The negative is removed from the top-$K$ and replaced with $j$: SDCG changes by $(\textit{rel}_s(j) - \textit{rel}_s(n)) / \log_2(k+1) > 0$.

In both subcases IDCG$_s$@$K$ is unchanged, so SNDCG@$K$ strictly increases.

\textbf{SRecall@$K$ (with $k' > K$).} The transposition removes a negative from the top-$K$ ($|\mathcal{N}_u \cap \pi_K|$ decreases by $1$) without changing $|\mathcal{P}_u \cap \pi_K|$ (since $j$ is unobserved). The numerator $|\mathcal{P}_u \cap \pi_K| - \gamma|\mathcal{N}_u \cap \pi_K|$ thus strictly increases by $\gamma$, while the denominator $|\mathcal{P}_u|$ is unchanged. Hence SRecall@$K$ strictly increases by $\gamma/|\mathcal{P}_u|$.

\textbf{SHR@$K$ (with $k' > K$ and $n$ the unique negative in $\mathcal{N}_u \cap \pi_K$).} Under the uniqueness condition, $\mathcal{N}_u \cap \sigma(\pi)_K = \emptyset$, so $\mathbb{I}[|\mathcal{N}_u \cap \sigma(\pi)_K| > 0] = 0$ while $\mathbb{I}[|\mathcal{N}_u \cap \pi_K| > 0] = 1$. The penalty term $-\gamma\,\mathbb{I}[|\mathcal{N}_u \cap \pi_K| > 0]$ thus increases from $-\gamma$ to $0$. Since $j$ is unobserved, $|\mathcal{P}_u \cap \pi_K|$ is unchanged, and the positive indicator $\mathbb{I}[|\mathcal{P}_u \cap \pi_K| > 0]$ is unchanged. Hence SHR@$K$ strictly increases by exactly $\gamma$. Without the uniqueness condition, the indicator $\mathbb{I}[|\mathcal{N}_u \cap \pi_K| > 0]$ remains $1$ after the swap and SHR@$K$ is unchanged, reflecting its binary penalty structure.
\end{proof}

\subsection{Monotonicity in \texorpdfstring{$\gamma$}{gamma}}
\label{app:theory_mono}

\begin{corollary}[Monotonicity and Dominance in $\gamma$]
\label{cor:monotonicity}
Let $M_s \in \{\mathrm{SNDCG}, \mathrm{SRecall}, \mathrm{SHR}\}$ at cutoff $K$.

\emph{(i) Monotonicity.} For any ranking $\pi$ with at least one negative item in the top-$K$, $M_s(\pi; \gamma)$ is strictly decreasing and affine in $\gamma$.

\emph{(ii) Dominance criterion.} For two rankings $\pi_A, \pi_B$ with $M_s(\pi_A; 0) \geq M_s(\pi_B; 0)$, we have $M_s(\pi_A; \gamma) \geq M_s(\pi_B; \gamma)$ for all $\gamma \geq 0$ if and only if $b(\pi_A) \leq b(\pi_B)$, where $b(\pi)$ is the metric-specific quantity derived in the proof.
\end{corollary}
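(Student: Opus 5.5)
The plan is to show that every signed metric decomposes as
\begin{equation*}
M_s(\pi;\gamma) = a(\pi) - \gamma\, b(\pi), \qquad b(\pi) \geq 0,
\end{equation*}
where $a(\pi)$ does not depend on $\gamma$ and equals the conventional metric $M(\pi)$ by Proposition~\ref{prop:characterization}(i). Both parts of the corollary then reduce to elementary facts about affine functions of $\gamma$. I would first derive $a$ and $b$ for each metric directly from the definitions in \S\ref{sec:signed_metrics}, and then treat (i) and (ii) uniformly.

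\textbf{Deriving the decomposition.} For SRecall@$K$ I would take $a = |\mathcal{P}_u\cap\pi_K|/|\mathcal{P}_u|$ and $b = |\mathcal{N}_u\cap\pi_K|/|\mathcal{P}_u|$. For SHR@$K$ I would take $a = \mathbb{I}[|\mathcal{P}_u\cap\pi_K|>0]$ and $b = \mathbb{I}[|\mathcal{N}_u\cap\pi_K|>0]$.

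For SNDCG@$K$, splitting $\mathrm{rel}_s$ gives
\begin{equation*}
\mathrm{SDCG}@K = \mathrm{DCG}@K - \gamma \sum_{k\leq K,\ \pi_K(u)[k]\in\mathcal{N}_u} \frac{1}{\log_2(k+1)}.
\end{equation*}
The main obstacle is the normalizer $\mathrm{ISDCG}@K$. Affinity in $\gamma$ requires this quantity to be independent of $\gamma$.

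I would argue this as follows. The ideal list places all positives first and all negatives last. Whenever $|\mathcal{P}_u|+|\mathcal{U}_u|\geq K$, which is the standard evaluation regime where unobserved candidates vastly outnumber $K$, no negative item enters the ideal top-$K$. In that case $\mathrm{ISDCG}@K=\mathrm{IDCG}@K$, which depends only on $|\mathcal{P}_u|$ and $K$. I would state this as a standing condition in the proof. I would also remark that in the degenerate case where negatives are forced into the ideal top-$K$, the ratio is no longer affine.

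Under this condition, $b(\pi)$ is the discounted negative mass in the top-$K$ divided by $\mathrm{IDCG}@K$.

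\textbf{Part (i).} Affinity is immediate from the decomposition. For strict decrease, I would check that $b(\pi)>0$ whenever some negative lies in the top-$K$:
\begin{itemize}
\item for SRecall, the count $|\mathcal{N}_u\cap\pi_K|$ is at least $1$;
\item for SHR, the indicator equals $1$;
\item for SNDCG, every discount $1/\log_2(k+1)$ is positive.
\end{itemize}
Hence the slope $-b(\pi)$ is strictly negative.

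\textbf{Part (ii).} Consider the difference
\begin{equation*}
D(\gamma) = M_s(\pi_A;\gamma) - M_s(\pi_B;\gamma) = \Delta a - \gamma\,\Delta b,
\end{equation*}
where $\Delta a = a(\pi_A)-a(\pi_B) = D(0) \geq 0$ by hypothesis and $\Delta b = b(\pi_A)-b(\pi_B)$.
\begin{itemize}
\item If $\Delta b\leq 0$, then $D(\gamma)\geq \Delta a\geq 0$ for all $\gamma\geq0$.
\item Conversely, if $\Delta b>0$, then any $\gamma > \Delta a/\Delta b$ gives $D(\gamma)<0$, violating dominance.
\end{itemize}
This establishes the equivalence with $b(\pi_A)\leq b(\pi_B)$, where $b$ is the metric-specific quantity identified above. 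The crossover point $\gamma^\ast=\Delta a/\Delta b$ would also be recorded, since it quantifies the dial interpretation.
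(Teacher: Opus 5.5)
Your proposal is correct and matches the paper's proof: the same decomposition $M_s(\pi;\gamma)=a(\pi)-\gamma\,b(\pi)$ with the same metric-specific $b(\pi)$, followed by the same elementary argument on affine functions of $\gamma$. Your one addition sharpens the paper's argument. You make explicit that $\mathrm{ISDCG}@K$ must not depend on $\gamma$, which is guaranteed when $|\mathcal{P}_u|+|\mathcal{U}_u|\geq K$ so that no negative enters the ideal top-$K$. The paper only notes that the normalizer is independent of $\pi$, and the affine and strict-decrease claims do fail in the degenerate case you flag.
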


Part~(i) licenses $\gamma$ as a \emph{decision-theoretic dial}: $\gamma \mapsto M_s(\pi; \gamma)$ is an affine function with non-positive slope $-b(\pi)$ (whose form is given in the proof), so practitioners read off valence-vs-relevance trade-offs by inspecting a single ranking at multiple $\gamma$ (Appendix~\ref{app:gamma}).

Part~(ii) explains the crossovers in our re-evaluation (\S\ref{sec:reeval}). Suppose two methods are \emph{comparable at $\gamma\!=\!0$}, meaning they have similar conventional NDCG (or Recall/HR) on the same data. If their negative-infiltration rates differ, the criterion in~(ii) is violated, so one method's curve must overtake the other's at some $\gamma > 0$. This is exactly the crossing behavior visible in Figure~\ref{fig:gamma}: methods with comparable $\gamma\!=\!0$ scores diverge as $\gamma$ grows, with the slope difference reflecting the gap in their (metric-specific) negative content within the top-$K$.

\begin{proof}[Proof of Corollary~\ref{cor:monotonicity}]
For any ranking $\pi$, let $\mathcal{N}_u^K \!:=\! \mathcal{N}_u \cap \pi_K$ denote the negatives in the top-$K$. We will show that each signed metric admits a decomposition of the form
\begin{equation*}
M_s(\pi; \gamma) = a(\pi) - \gamma \cdot b(\pi),
\end{equation*}
where $a(\pi)$ collects all $\gamma$-independent terms and $b(\pi) \geq 0$ depends only on $\mathcal{N}_u \cap \pi_K$ and takes a metric-specific form (derived below). The corollary then follows immediately.

\emph{(i) Monotonicity.} For SNDCG@$K$, the signed relevance $\mathrm{rel}_s$ (\S\ref{sec:signed_metrics}) splits the numerator as $\text{SDCG@}K = \text{DCG@}K - \gamma \cdot D(\pi)$, where $D(\pi) = \sum_{k:\pi(k) \in \mathcal{N}_u^K} 1/\log_2(k+1) \geq 0$ is the position-weighted negative count. The denominator $\text{IDCG}_s@K$ is independent of $\pi$ (it is determined by the user-level interaction sizes and $K$). Setting $a(\pi) := \text{DCG@}K / \text{IDCG}_s@K$ and $b(\pi) := D(\pi) / \text{IDCG}_s@K$ yields the desired decomposition. SRecall@$K$ and SHR@$K$ decompose analogously with $b(\pi) = |\mathcal{N}_u^K|/|\mathcal{P}_u|$ and $b(\pi) = \mathbf{1}[|\mathcal{N}_u^K| > 0]$ respectively. In all three cases, $b(\pi) > 0$ iff $\mathcal{N}_u^K \neq \emptyset$, giving strict decrease in $\gamma$ under that condition.

\emph{(ii) Dominance criterion.} Since each signed metric is affine in $\gamma$ with slope $-b(\pi) \leq 0$, the difference $M_s(\pi_A; \gamma) - M_s(\pi_B; \gamma) = [a(\pi_A) - a(\pi_B)] - \gamma \cdot [b(\pi_A) - b(\pi_B)]$ is itself affine in $\gamma$. Given $M_s(\pi_A; 0) \geq M_s(\pi_B; 0)$ (i.e., $a(\pi_A) \geq a(\pi_B)$), this difference remains non-negative for all $\gamma \geq 0$ if and only if $b(\pi_A) \leq b(\pi_B)$, which is the stated condition.
\end{proof}

\section{Sensitivity to \texorpdfstring{$\gamma$}{gamma}}
\label{app:gamma}

This appendix analyzes how ranking conclusions depend on the penalty parameter $\gamma$. We first examine $\gamma$ sensitivity across all five benchmarks, document the underlying mechanism via top-$K$ negative infiltration statistics, and discuss how to choose $\gamma$ in practice.

\subsection{\texorpdfstring{$\gamma$}{gamma} Sensitivity Across Datasets}
\label{app:gamma_main}

\begin{figure*}[t]
\centering
\includegraphics[width=\textwidth]{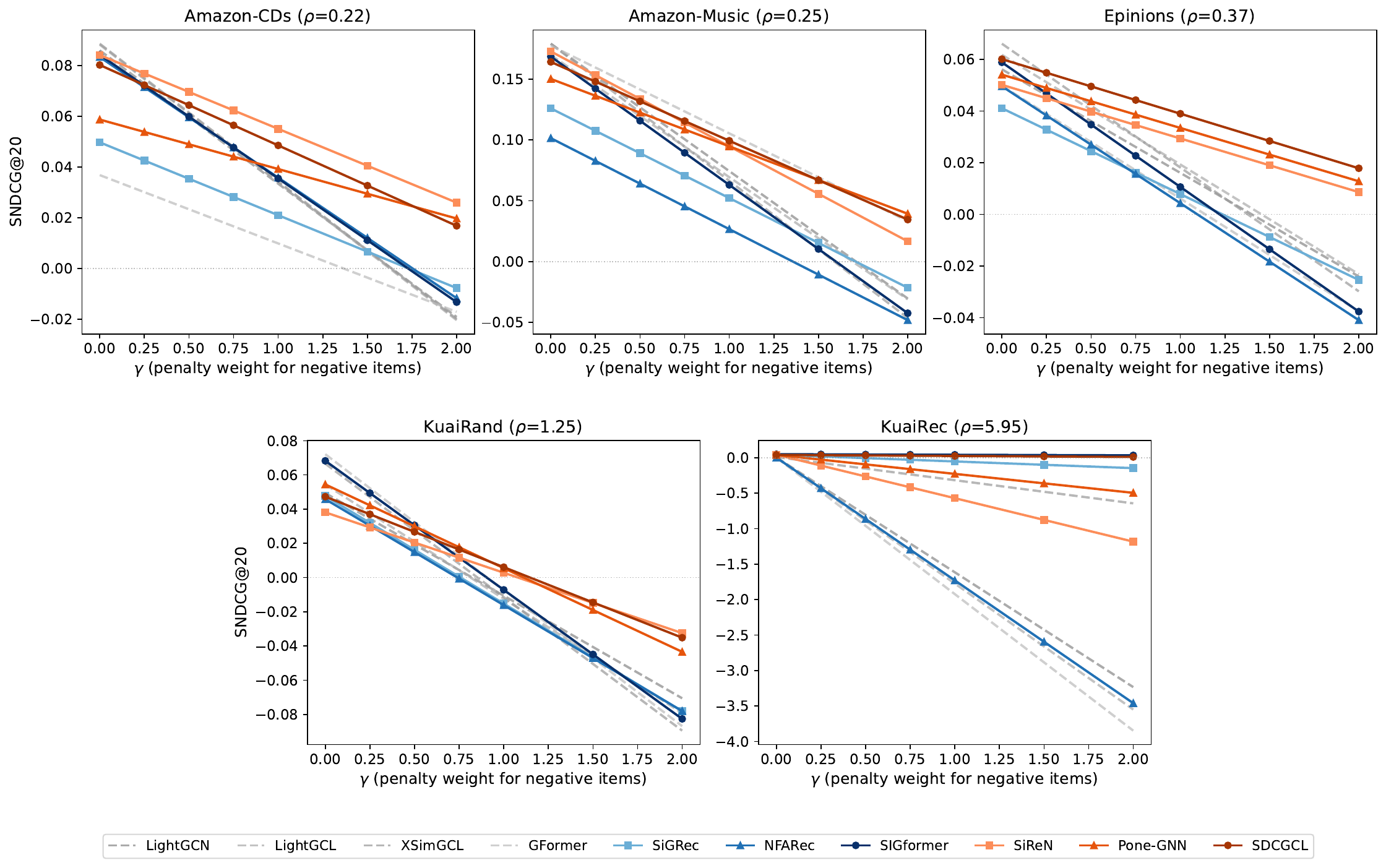}
\caption{SNDCG@20 as a function of $\gamma$ across all five benchmarks. Dashed gray: unsigned baselines; colored solid: sign-aware methods. At $\gamma\!=\!0$, SNDCG reduces to conventional NDCG (Proposition~\ref{prop:characterization}(i)); larger $\gamma$ penalizes negatives in the top-$K$ more heavily, exposing ranking changes invisible at $\gamma\!=\!0$. Note the substantially different y-axis scale on KuaiRec, reflecting the catastrophic valence failure of unsigned methods and NFARec when negatives dominate the interaction pool.}
\label{fig:gamma}
\end{figure*}

Figure~\ref{fig:gamma} shows SNDCG@20 as a function of $\gamma$ across the five benchmarks. The qualitative behavior splits cleanly along the negative-to-positive ratio $\rho$.

\textbf{Low-to-mid-$\rho$ datasets (Amazon-CDs, Amazon-Music, Epinions, KuaiRand).} At $\gamma\!=\!0$, unsigned baselines occupy the top tier across all four datasets, with SIGformer typically joining them and SiReN or SDCGCL also reaching the top tier on Music and Epinions respectively. As $\gamma$ increases, each curve's slope reflects top-$K$ negative infiltration: heavy-infiltration methods (unsigned baselines, NFARec, SIGformer) decline steeply, while SiReN, Pone-GNN, and SDCGCL decline gradually. The curves cross and the sign-aware Loss-driven methods (SiReN, Pone-GNN, SDCGCL) take the top positions, with crossover points varying by dataset, occurring around $\gamma\!\approx\!0.5$ on CDs and Epinions, $\gamma\!\approx\!0.75$ on Music, and between $\gamma\!=\!0.5$ and $\gamma\!=\!1.25$ on KuaiRand. By $\gamma\!=\!2$, the relative ordering reverses decisively: SiReN, Pone-GNN, and SDCGCL hold the top three SNDCG values on every dataset, remaining positive on CDs, Music, and Epinions, and turning only marginally negative on KuaiRand where denser negative feedback drags all methods below zero. This consistent dominance across $\gamma$ confirms that their advantage under signed metrics is robust to the specific $\gamma$ choice.

\textbf{High-$\rho$ regime (KuaiRec, $\rho\!=\!5.95$).} Methods separate dramatically across three tiers rather than via gradual crossover. With negatives outnumbering positives by nearly 6:1, placing a negative item in the top-$K$ displaces many more potentially relevant items than in low-$\rho$ regimes. The near-fully-infiltrated methods (LightGCN, LightGCL, GFormer, and NFARec, all with 99--100\% infiltration in Table~\ref{tab:neg_infiltration}) collapse to roughly $-3.5$ at $\gamma\!=\!2$, behaving as if effectively blind to valence on this dataset (Table~\ref{tab:full_diagnosis}). At the other extreme, SIGformer (with $\beta\!=\!-0.2$), SDCGCL, and SiGRec remain close to zero throughout the sweep, reflecting their substantially lower infiltration. The remaining methods occupy an intermediate tier: XSimGCL (50.8\% infiltration), Pone-GNN (69.4\%), and SiReN (83.8\%) reach $\gamma\!=\!2$ SNDCG values of approximately $-0.64$, $-0.50$, and $-1.18$ respectively. This three-way separation reflects the absence of any unsigned method that achieves meaningful valence separation on KuaiRec; at high $\rho$ the evaluation sharply distinguishes methods that engage with valence from those that do not.

\subsection{Negative Infiltration Statistics}
\label{app:gamma_infiltration}

The slope of each curve in Figure~\ref{fig:gamma} is determined by how often a method places negative items in the top-$K$. Table~\ref{tab:neg_infiltration} reports a closely related summary statistic: ``Neg $>$ Pos'' is the fraction of users with at least one negative item ranked above a positive item, which captures the user-experienced harm of negative infiltration. Across the four low-to-mid-$\rho$ datasets, sign-aware Loss-driven methods (SiReN, Pone-GNN, SDCGCL) consistently exhibit lower infiltration rates than unsigned baselines and Architecture-driven sign-aware methods, mirroring their flatter SNDCG curves under increasing $\gamma$. KuaiRec is again the exception: SIGformer (with $\beta\!=\!-0.2$), SDCGCL, and SiGRec achieve near-zero infiltration ($\leq\!8.5\%$), while LightGCN, LightGCL, GFormer, and NFARec reach 99--100\%, with XSimGCL, SiReN, and Pone-GNN occupying intermediate values (50.8\%, 83.8\%, 69.4\%). This pattern is consistent with the three-tier separation observed in Figure~\ref{fig:gamma}.

\begin{table}[t]
\caption{Negative infiltration in top-20 recommendations ($K\!=\!20$). ``Neg $>$ Pos'' is the fraction (\%) of users with at least one negative item ranked above a positive item. Results are mean\textsubscript{$\pm$std} over three seeds. Dataset headers show $\rho$ in parentheses.}
\label{tab:neg_infiltration}
\centering
\footnotesize
\setlength{\tabcolsep}{4pt}
\renewcommand{\arraystretch}{0.9}
\begin{tabular}{ll l ccccc}
\toprule
\multicolumn{2}{c}{\textbf{Group}} & \textbf{Method} & \textbf{CDs (0.22)} & \textbf{Music (0.25)} & \textbf{Epin.\ (0.37)} & \textbf{KuaiR.\ (1.25)} & \textbf{KuaiRec (5.95)} \\
\midrule
\multirow{4}{*}{Unsigned} &
  & LightGCN    & 11.0\textsubscript{$\pm$.1} & 17.0\textsubscript{$\pm$.7}  & 12.2\textsubscript{$\pm$.1} & 17.0\textsubscript{$\pm$1.3} & 99.3\textsubscript{$\pm$1.1}  \\
& & GFormer     & 7.4\textsubscript{$\pm$.1}  & 14.8\textsubscript{$\pm$1.9} & 13.1\textsubscript{$\pm$.0} & 18.8\textsubscript{$\pm$.2}  & \textbf{100.0}                \\
& & LightGCL    & 11.5\textsubscript{$\pm$.2} & 15.4\textsubscript{$\pm$1.1} & 11.5\textsubscript{$\pm$.3} & 15.4\textsubscript{$\pm$1.0} & 99.9\textsubscript{$\pm$.1}   \\
& & XSimGCL     & 11.1\textsubscript{$\pm$.3} & 16.8\textsubscript{$\pm$.7}  & 12.9\textsubscript{$\pm$.2} & 17.8\textsubscript{$\pm$.4}  & 50.8\textsubscript{$\pm$1.6}  \\
\cmidrule(lr){1-8}
\multirow{6}{*}{Signed} & \multirow{3}{*}{Arch.}
  & SiGRec      & 8.0\textsubscript{$\pm$.4}  & 10.8\textsubscript{$\pm$5.4} & 11.0\textsubscript{$\pm$.1} & 16.5\textsubscript{$\pm$.3}  & 8.5\textsubscript{$\pm$.6}    \\
& & NFARec      & 10.3\textsubscript{$\pm$.1} & 13.3\textsubscript{$\pm$1.0} & 14.2\textsubscript{$\pm$.1} & 15.7\textsubscript{$\pm$.3}  & 99.8\textsubscript{$\pm$.3}   \\
& & SIGformer   & 10.5\textsubscript{$\pm$.2} & 17.3\textsubscript{$\pm$.6}  & 13.2\textsubscript{$\pm$.6} & 17.8\textsubscript{$\pm$.2}  & 3.1\textsubscript{$\pm$.4}    \\
\cmidrule(lr){2-8}
& \multirow{3}{*}{Loss}
  & SiReN       & 6.6\textsubscript{$\pm$.1}  & 13.3\textsubscript{$\pm$.2}  & 6.5\textsubscript{$\pm$.3}  & 9.4\textsubscript{$\pm$.7}   & 83.8\textsubscript{$\pm$28.1} \\
& & Pone-GNN    & 5.6\textsubscript{$\pm$.2}  & 9.6\textsubscript{$\pm$.4}   & 6.7\textsubscript{$\pm$.2}  & 12.6\textsubscript{$\pm$.3}  & 69.4\textsubscript{$\pm$12.0} \\
& & SDCGCL      & 5.1\textsubscript{$\pm$1.8} & 11.2\textsubscript{$\pm$.2}  & 5.7\textsubscript{$\pm$.5}  & 10.8\textsubscript{$\pm$.6}  & 5.7\textsubscript{$\pm$.6}    \\
\bottomrule
\end{tabular}
\end{table}

\subsection{Choosing \texorpdfstring{$\gamma$}{gamma} in Practice}
\label{app:gamma_choice}

The penalty weight $\gamma$ operationalizes the relative cost of recommending a negative item versus the reward of recommending a positive one. A simple decision-theoretic view sets $\gamma\!=\!|u_\text{neg}|/u_\text{pos}$, the ratio of the per-exposure cost of a negative item to the per-exposure reward of a positive one. The symmetric default $\gamma\!=\!1$ treats a negative-item intrusion as canceling a positive-item recommendation of equal rank (a neutral benchmark). Asymmetric cost structures admit $\gamma\!>\!1$ (dissatisfaction-costly domains: news, advertising, safety-sensitive recommendation) or $\gamma\!<\!1$ (exploratory browsing); since $\gamma$ is an evaluation-time parameter, practitioners may report multiple values.

\section{Statistical Significance of Re-evaluation}
\label{app:sig_test}

This appendix supplements the re-evaluation results of \S\ref{sec:reeval} with detailed paired hypothesis tests on user-level scores. We verify that the conclusions drawn from Table~\ref{tab:reeval} are robust by recomputing significance on three datasets spanning low, medium, and high $\rho$ (Amazon-CDs, KuaiRand, KuaiRec). For each method pair within each column, we run a paired $t$-test on user-level scores, Holm--Bonferroni corrected within each column at family-wise $\alpha\!=\!0.05$; seed averaging precedes testing to avoid independence violations from pooling per-seed scores. Table~\ref{tab:sig_all} reports the full results, with $B/W$ counts indicating how many of the nine other methods the row method significantly \textbf{B}eats / is \textbf{W}orse than under each metric.

\textbf{Reranking is significant, not noise.} The top-SNDCG sign-aware methods all carry $\ddagger$ markers (significantly outperform every unsigned baseline): SiReN on CDs (also $\dagger$, beating all nine others), SiReN, Pone-GNN, and SDCGCL on KuaiRand, and SIGformer on KuaiRec (also $\dagger$). These markers establish that the NDCG-to-SNDCG reranking is a real shift in ranking dominance, not noise from between-user heterogeneity.

\textbf{The reordering is consistent with the significance structure.} On CDs, XSimGCL significantly outperforms 8 of 9 other methods on NDCG but only 2 of 9 on SNDCG, while SiReN moves from 4/9 on NDCG to 9/9 on SNDCG. On KuaiRand, GFormer is the unique NDCG leader (9/9, $\dagger$) yet carries no SNDCG markers, while SiReN, Pone-GNN, and SDCGCL all attain $\ddagger$ on SNDCG despite their lower NDCG rankings. The shift in significance counts directly mirrors the change in metric.

\textbf{KuaiRec amplifies the pattern.} SIGformer attains $\ddagger\dagger$ on both metrics; SiGRec, SiReN, Pone-GNN, and SDCGCL also reach $\ddagger$ on NDCG, showing that on this high-$\rho$ dataset unsigned methods with near-zero NDCG fail catastrophically rather than noisily. Because all comparisons are paired at the user level, these conclusions are robust to the substantial between-user heterogeneity that dominates the standard deviations reported in Table~\ref{tab:reeval}.

\begin{table}[t]
\caption{Statistical significance on Amazon-CDs, KuaiRand, and KuaiRec. Per-user NDCG@20 and SNDCG@20 (mean over 3 seeds). $B/W$ count how many of the nine other methods the row method significantly outperforms / is outperformed by under Holm-corrected paired $t$-test at $p\!<\!0.05$. $\ddagger$: significantly better than every unsigned baseline (sign-aware rows only). $\dagger$: significantly better than all other nine methods.}
\label{tab:sig_all}
\centering
\footnotesize
\setlength{\tabcolsep}{4pt}
\renewcommand{\arraystretch}{0.9}
\begin{tabular}{l ll l ccc ccc}
\toprule
& & & & \multicolumn{3}{c}{\textbf{NDCG@20}} & \multicolumn{3}{c}{\textbf{SNDCG@20}} \\
\cmidrule(lr){5-7} \cmidrule(lr){8-10}
\textbf{Dataset} & \multicolumn{2}{c}{\textbf{Group}} & \textbf{Method} & \textbf{Mean} & $B$ & $W$ & \textbf{Mean} & $B$ & $W$ \\
\midrule
\multirow{10}{*}{\rotatebox{90}{Amazon-CDs}}
& \multirow{4}{*}{Unsigned} &
  & LightGCN  & .0857                          & 5 & 2 & .0337                          & 2 & 2 \\
& & & GFormer   & .0370                          & 0 & 9 & .0099                          & 0 & 9 \\
& & & LightGCL  & .0884                          & 7 & 0 & .0339                          & 2 & 2 \\
& & & XSimGCL   & .0886                          & 8 & 0 & .0347                          & 2 & 2 \\
\cmidrule(lr){2-10}
& \multirow{6}{*}{Signed} & \multirow{3}{*}{Arch.}
  & SiGRec    & .0506                          & 1 & 8 & .0203                          & 1 & 8 \\
& & & NFARec    & .0840                          & 5 & 2 & .0363                          & 2 & 2 \\
& & & SIGformer & .0865                          & 5 & 0 & .0360                          & 2 & 2 \\
\cmidrule(lr){3-10}
& & \multirow{3}{*}{Loss}
  & SiReN     & .0810                          & 4 & 5 & .0526$^{\ddagger\dagger}$      & 9 & 0 \\
& & & Pone-GNN  & .0641                          & 3 & 6 & .0429$^\ddagger$               & 8 & 1 \\
& & & SDCGCL    & .0550                          & 2 & 7 & .0341                          & 2 & 2 \\
\midrule
\multirow{10}{*}{\rotatebox{90}{KuaiRand}}
& \multirow{4}{*}{Unsigned} &
  & LightGCN  & .0586                          & 6 & 3 & $-$.0101                       & 2 & 3 \\
& & & GFormer   & .0718$^\dagger$                & 9 & 0 & $-$.0082                       & 2 & 3 \\
& & & LightGCL  & .0507                          & 3 & 4 & $-$.0112                       & 0 & 3 \\
& & & XSimGCL   & .0662                          & 7 & 1 & $-$.0115                       & 0 & 3 \\
\cmidrule(lr){2-10}
& \multirow{6}{*}{Signed} & \multirow{3}{*}{Arch.}
  & SiGRec    & .0495                          & 3 & 4 & $-$.0143                       & 0 & 6 \\
& & & NFARec    & .0463                          & 2 & 7 & $-$.0157                       & 0 & 7 \\
& & & SIGformer & .0680                          & 7 & 1 & $-$.0075                       & 3 & 3 \\
\cmidrule(lr){3-10}
& & \multirow{3}{*}{Loss}
  & SiReN     & .0374                          & 0 & 9 & .0039$^\ddagger$               & 7 & 0 \\
& & & Pone-GNN  & .0516                          & 3 & 4 & .0045$^\ddagger$               & 7 & 0 \\
& & & SDCGCL    & .0428                          & 1 & 8 & .0039$^\ddagger$               & 7 & 0 \\
\midrule
\multirow{10}{*}{\rotatebox{90}{KuaiRec}}
& \multirow{4}{*}{Unsigned} &
  & LightGCN  & .0015                          & 2 & 6 & $-$1.828                       & 1 & 8 \\
& & & GFormer   & .0007                          & 0 & 7 & $-$1.931                       & 0 & 9 \\
& & & LightGCL  & .0009                          & 0 & 6 & $-$1.781                       & 2 & 7 \\
& & & XSimGCL   & .0083                          & 4 & 5 & $-$.365                        & 6 & 3 \\
\cmidrule(lr){2-10}
& \multirow{6}{*}{Signed} & \multirow{3}{*}{Arch.}
  & SiGRec    & .0412$^\ddagger$               & 7 & 1 & $-$.057$^\ddagger$             & 7 & 2 \\
& & & NFARec    & .0007                          & 0 & 7 & $-$1.622                       & 3 & 6 \\
& & & SIGformer & .0496$^{\ddagger\dagger}$      & 9 & 0 & .0418$^{\ddagger\dagger}$      & 9 & 0 \\
\cmidrule(lr){3-10}
& & \multirow{3}{*}{Loss}
  & SiReN     & .0153$^\ddagger$               & 5 & 4 & $-$1.345                       & 4 & 5 \\
& & & Pone-GNN  & .0314$^\ddagger$               & 6 & 3 & $-$.652                        & 5 & 4 \\
& & & SDCGCL    & .0382$^\ddagger$               & 6 & 1 & .0243$^\ddagger$               & 8 & 1 \\
\bottomrule
\end{tabular}
\end{table}

\section{Evaluation Subset Validation}
\label{app:subset_validation}

All re-evaluation results in \S\ref{sec:reeval} (and the $\mathcal{L}_\text{sign}$ analysis in the proof-of-concept experiment, \S\ref{sec:exp}) are computed on the ``Both P\&N'' subset: users with at least one positive and one negative test item. This restriction reflects the spirit of backward compatibility (Proposition~\ref{prop:characterization}(i)): when $\mathcal{N}_u\!=\!\emptyset$, signed metrics coincide with their conventional counterparts at every $\gamma$, so users without negative feedback contribute no signed-vs-conventional distinction. Conventional metrics are computed on the same subpopulation for fairness. A natural concern is whether this subset restriction alters the method ranking that would be obtained on the full user population. We test this directly by recomputing conventional NDCG@20 on the full test set and comparing the induced method ranking (over all ten methods) against the ranking on the Both P\&N subset.

\begin{table}[t]
\caption{Validation of the Both P\&N evaluation subset. ``Both/Full'' is the fraction of test users retained by the subset restriction. $\rho_s$ is the Spearman rank correlation and $\tau$ is the Kendall rank correlation between method rankings induced by NDCG@20 on the subset vs.\ on the full test set, computed over all ten methods (4 unsigned + 6 sign-aware). $\Delta$NDCG is the average difference (subset $-$ full-set) across methods.}
\label{tab:subset_validation}
\centering
\small
\begin{tabular}{l cccc}
\toprule
\textbf{Dataset} ($\rho$) & \textbf{Both/Full} & \textbf{Spearman $\rho_s$} & \textbf{Kendall $\tau$} & $\Delta$\textbf{NDCG} \\
\midrule
Amazon-CDs (0.22)    & 31.2\%  & 0.903\textsuperscript{***} & 0.822 & $+$.003 \\
Amazon-Music (0.25)  & 31.6\%  & 0.988 & 0.956 & $+$.003 \\
Epinions (0.37)      & 55.5\%  & 0.988 & 0.956 & $+$.003 \\
KuaiRand (1.25)      & 71.1\%  & 1.000 & 1.000 & $-$.002 \\
KuaiRec (5.95)       & 100.0\% & 1.000 & 1.000 & n/a \\
\midrule
\textbf{Mean}        &         & \textbf{0.976} & \textbf{0.947} & \\
\bottomrule
\multicolumn{5}{l}{\footnotesize \textsuperscript{***} $p < 10^{-3}$; other correlations are $p < 10^{-4}$ (exact test, $n\!=\!10$).} \\
\multicolumn{5}{p{0.9\columnwidth}}{\footnotesize The ``Both/Full'' fractions here are measured at evaluation time and use as denominator only test users with at least one ranking-eligible item (any sign) in the candidate pool, whereas the ``Both P\&N'' column in Table~\ref{tab:datasets} uses all raw test users as denominator. The two differ when some test users have no ranking-eligible items under the evaluation pipeline (5-core filtering plus candidate-pool constraints) and are therefore excluded from the denominator here; on KuaiRec the Both/Full reaches 100\% because every evaluation-eligible test user happens to have items of both signs.}
\end{tabular}
\end{table}

The table shows that the subset restriction preserves method ranking with mean Spearman $\rho_s\!=\!0.976$ across datasets. The minimum correlation is on Amazon-CDs ($\rho_s\!=\!0.903$, still significant at $p < 10^{-3}$), where three mid-tier sign-aware methods (SiGRec, SDCGCL, Pone-GNN) shuffle positions between the two rankings; the top-ranked method (XSimGCL) and the bottom-ranked method (GFormer) remain stable on both rankings. On all other datasets, $\rho_s \geq 0.988$, with KuaiRand and KuaiRec reaching the maximum 1.000 (on KuaiRec, every test user qualifies for the Both P\&N subset, so the two rankings are definitionally identical).

The $\Delta$NDCG column reports the direction of the subset's effect on absolute scores. On the three low-$\rho$ datasets (CDs, Music, Epinions), Both P\&N users score approximately $+0.003$ NDCG higher on average than the full population: users who express both positive and negative feedback are slightly more active and their preferences are more clearly signaled. On KuaiRand ($\rho\!=\!1.25$), the direction reverses ($-0.002$): with denser negative feedback, the Both P\&N users have larger candidate pools of competing items and recommendation becomes marginally harder. These shifts are small relative to the method-to-method differences that drive ranking conclusions (typical NDCG gap of 0.01--0.05 across methods), and they are consistent in sign with the underlying dataset structure rather than indicative of selection bias against any particular method class.

We therefore conclude that our evaluation protocol satisfies the backward-compatibility constraint motivated by Proposition~\ref{prop:characterization}(i) without introducing a distortion of method ranking. Readers interested in absolute conventional-metric performance on the full population should interpret the numbers in Table~\ref{tab:reeval} with the small, dataset-specific offsets above in mind; the relative conclusions (which are what the paper's arguments rest on) are unaffected.

\section{Gradient Analysis of Valence Injection}
\label{app:gradient}

This appendix formalizes why the auxiliary loss $\mathcal{L}_\text{sign}$ introduced in the proof-of-concept experiment (\S\ref{sec:exp}) succeeds where natural alternatives fail. We analyze the gradient structure of two loss designs (\S\ref{app:gradient_setup}--\S\ref{app:gradient_decoupling}), provide geometric intuition (\S\ref{app:gradient_geometry}), document additional valence-injection approaches that we explored but discarded (\S\ref{app:gradient_alternatives}), and empirically validate the two-stage (training + inference) view via ablation (\S\ref{app:gradient_ablation}).

\subsection{Setup}
\label{app:gradient_setup}

For a negative item $n$ with embedding $\mathbf{e}_n$, any auxiliary valence loss $\mathcal{L}_\text{val}$ updates $\mathbf{e}_n$ by $-\eta \, \partial \mathcal{L}_\text{val} / \partial \mathbf{e}_n$, altering the relevance score $s_{un} = \mathbf{e}_u^\top \mathbf{e}_n$ by
\begin{equation}
\label{eq:score_change}
\Delta s_{un} = -\eta \, \mathbf{e}_u^\top \frac{\partial \mathcal{L}_\text{val}}{\partial \mathbf{e}_n}.
\end{equation}
If this projection is non-zero, the valence loss inevitably alters relevance. We compare two loss designs along this criterion.

\subsection{Full Relevance Coupling of Direct Separation}
\label{app:gradient_coupling}

\begin{proposition}[Full Relevance Coupling of Direct Separation]
\label{prop:pn_coupling}
For the direct separation loss $\mathcal{L}_\text{P>N} = -\ln \sigma(\hat{y}_{up} - \hat{y}_{un})$, the gradient with respect to $\mathbf{e}_n$ is proportional to $\mathbf{e}_u$. Therefore, by Eq.~\ref{eq:score_change}, every non-trivial gradient step necessarily modifies the relevance score $s_{un}$.
\end{proposition}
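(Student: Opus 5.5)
The plan is a direct chain-rule computation, treating $\mathbf{e}_u$ and $\mathbf{e}_p$ as fixed with respect to $\mathbf{e}_n$. This is the natural reading of the per-item update in Eq.~\ref{eq:score_change}: we differentiate the loss with respect to the final item embedding $\mathbf{e}_n$ that enters the inner-product score. Write $z = \hat{y}_{up} - \hat{y}_{un} = \mathbf{e}_u^\top\mathbf{e}_p - \mathbf{e}_u^\top\mathbf{e}_n$. Then $\partial z/\partial \mathbf{e}_n = -\mathbf{e}_u$.

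Using $\frac{d}{dz}\bigl(-\ln\sigma(z)\bigr) = -(1-\sigma(z))$, we obtain
\begin{equation*}
\frac{\partial \mathcal{L}_\text{P>N}}{\partial \mathbf{e}_n} = \bigl(1-\sigma(\hat{y}_{up}-\hat{y}_{un})\bigr)\,\mathbf{e}_u .
\end{equation*}
This is a scalar multiple $c\,\mathbf{e}_u$ with $c = 1-\sigma(z)\in(0,1)$, which establishes proportionality to $\mathbf{e}_u$.

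Next we substitute this gradient into Eq.~\ref{eq:score_change}:
\begin{equation*}
\Delta s_{un} = -\eta\, c\,\|\mathbf{e}_u\|^2 .
\end{equation*}
Here a ``non-trivial gradient step'' means a step whose gradient is nonzero. Since $c>0$ always, the gradient is nonzero exactly when $\mathbf{e}_u\neq \mathbf{0}$. In that case $\Delta s_{un} < 0$ strictly for $\eta>0$, so every such step lowers the relevance score $s_{un}$ by an amount proportional to the full squared norm $\|\mathbf{e}_u\|^2$. The gradient therefore has no component orthogonal to $\mathbf{e}_u$: all of it is ``relevance'' direction. This is what the word ``full'' in the proposition's name refers to, and it is the contrast with $\mathcal{L}_\text{sign}$ in Proposition~\ref{prop:sign_decoupling}.

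The only real obstacle is interpretive rather than computational. In graph models, $\mathbf{e}_n$ is itself a function of the base parameters through propagation, and $\mathbf{e}_u$ may also depend on them. I would state explicitly that the analysis is at the level of the output embeddings, matching the setup of Eq.~\ref{eq:score_change}. I would also remark that the user-side update, with gradient proportional to $(\mathbf{e}_n-\mathbf{e}_p)$, is not needed for the claim.
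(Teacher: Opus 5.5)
Your proposal is correct and matches the paper's proof: the same chain-rule computation gives $\partial\mathcal{L}_\text{P>N}/\partial\mathbf{e}_n = (1-\sigma(\hat{y}_{up}-\hat{y}_{un}))\,\mathbf{e}_u$, and substituting into Eq.~\ref{eq:score_change} gives $\Delta s_{un} = -\eta\,(1-\sigma(\cdot))\,\|\mathbf{e}_u\|^2$. You also state explicitly that a ``non-trivial step'' requires $\mathbf{e}_u \neq \mathbf{0}$, a condition the paper's ``$\neq 0$'' assumes without saying so.
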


\begin{proof}[Proof of Proposition~\ref{prop:pn_coupling}]
Since $\hat{y}_{un} = \mathbf{e}_u^\top \mathbf{e}_n$, we have $\partial \hat{y}_{un} / \partial \mathbf{e}_n = \mathbf{e}_u$. Applying the chain rule:
\begin{equation}
\frac{\partial \mathcal{L}_\text{P>N}}{\partial \mathbf{e}_n} = -\frac{\partial}{\partial \hat{y}_{un}} \ln \sigma(\hat{y}_{up} - \hat{y}_{un}) \cdot \frac{\partial \hat{y}_{un}}{\partial \mathbf{e}_n} = \big(1 - \sigma(\hat{y}_{up} - \hat{y}_{un})\big) \, \mathbf{e}_u
\end{equation}
The coefficient $(1 - \sigma(\cdot)) \in (0, 1)$ is strictly positive, so the gradient is a positive scalar multiple of $\mathbf{e}_u$. Substituting into Eq.~\ref{eq:score_change} yields $\Delta s_{un} = -\eta (1 - \sigma(\cdot)) \|\mathbf{e}_u\|^2 \neq 0$.
\end{proof}

\subsection{Partial Relevance Decoupling of Sign Classification}
\label{app:gradient_decoupling}

\begin{proposition}[Partial Relevance Decoupling of Sign Classification]
\label{prop:sign_decoupling}
Consider the negative-interaction case ($y = 0$) of the sign classification loss, which reduces to $\mathcal{L}_\text{sign} = -\ln(1 - \hat{p}_{un})$ with $\hat{p}_{un} = \sigma(\mathbf{w}^\top (\mathbf{e}_u \odot \mathbf{e}_n) + b)$. The gradient with respect to $\mathbf{e}_n$ is proportional to $\mathbf{w} \odot \mathbf{e}_u$, which has a non-zero component orthogonal to $\mathbf{e}_u$ whenever $\mathbf{w}$ is not a constant vector. This orthogonal component modifies the element-wise pattern of $\mathbf{e}_n$ without altering $\mathbf{e}_u^\top \mathbf{e}_n$. The positive-interaction case ($y=1$) is analogous with the sign of the gradient reversed.
\end{proposition}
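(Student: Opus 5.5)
The plan is a direct chain-rule computation followed by an orthogonal decomposition, in parallel with the proof of Proposition~\ref{prop:pn_coupling}.

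\textbf{Step 1: the gradient.} Write $z = \mathbf{w}^\top(\mathbf{e}_u \odot \mathbf{e}_n) + b = \sum_k w_k e_{u,k} e_{n,k} + b$, so $\partial z/\partial \mathbf{e}_n = \mathbf{w} \odot \mathbf{e}_u$. For $y=0$ we have $\mathcal{L}_\text{sign} = -\ln(1-\sigma(z))$. Using $\frac{d}{dz}[-\ln(1-\sigma(z))] = \sigma(z) = \hat{p}_{un}$, this gives
\begin{equation*}
\frac{\partial \mathcal{L}_\text{sign}}{\partial \mathbf{e}_n} = \hat{p}_{un}\,(\mathbf{w}\odot\mathbf{e}_u),
\end{equation*}
a strictly positive multiple of $\mathbf{w}\odot\mathbf{e}_u$.

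\textbf{Step 2: orthogonal decomposition.} Assume $\mathbf{e}_u \neq \mathbf{0}$. Decompose $\mathbf{g} := \mathbf{w}\odot\mathbf{e}_u = c\,\mathbf{e}_u + \mathbf{r}$, where
\begin{equation*}
c = \frac{\mathbf{e}_u^\top(\mathbf{w}\odot\mathbf{e}_u)}{\|\mathbf{e}_u\|^2} = \frac{\sum_k w_k e_{u,k}^2}{\sum_k e_{u,k}^2}
\end{equation*}
and $\mathbf{e}_u^\top \mathbf{r} = 0$ by construction. Then $\mathbf{r} = \mathbf{0}$ iff $(w_k - c)\,e_{u,k} = 0$ for all $k$, i.e.\ iff $w_k$ equals the common value $c$ on the support of $\mathbf{e}_u$. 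So $\mathbf{r} \neq \mathbf{0}$ exactly when $\mathbf{w}$ is non-constant on $\mathrm{supp}(\mathbf{e}_u)$.

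\textbf{Step 3: effect on the score.} Because $s_{un} = \mathbf{e}_u^\top\mathbf{e}_n$ is linear in $\mathbf{e}_n$, the update $-\eta\,\hat{p}_{un}\,\mathbf{r}$ changes $s_{un}$ by $-\eta\,\hat{p}_{un}\,\mathbf{e}_u^\top\mathbf{r} = 0$, exactly and not only to first order. By Eq.~\ref{eq:score_change}, the whole score change is therefore $\Delta s_{un} = -\eta\,\hat{p}_{un}\,c\,\|\mathbf{e}_u\|^2$, which comes entirely from the parallel component. The orthogonal component still changes $\mathbf{e}_n$, and so changes the element-wise pattern $\mathbf{e}_u\odot\mathbf{e}_n$ read by the probe. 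This is the sense in which the decoupling is only ``partial'', in contrast to the full coupling of Proposition~\ref{prop:pn_coupling}.

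\textbf{Step 4: the positive case.} For $y=1$, $\mathcal{L}_\text{sign} = -\ln\sigma(z)$ has derivative $-(1-\hat{p}_{ui})$ in $z$. The gradient is therefore $-(1-\hat{p}_{ui})(\mathbf{w}\odot\mathbf{e}_u)$: the same direction with the sign reversed, and Steps 2--3 carry over verbatim.

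\textbf{Main obstacle.} The calculus is routine; the delicate point is the precise non-degeneracy condition in Step 2. The statement's hypothesis ``$\mathbf{w}$ is not a constant vector'' is slightly too weak. If $\mathbf{e}_u$ has zero coordinates, $\mathbf{w}$ may vary only off $\mathrm{supp}(\mathbf{e}_u)$, in which case $\mathbf{r} = \mathbf{0}$; and if $\mathbf{e}_u = \mathbf{0}$ the gradient vanishes altogether. I would state the exact condition ($\mathbf{e}_u \neq \mathbf{0}$ and $\mathbf{w}$ non-constant on $\mathrm{supp}(\mathbf{e}_u)$) and note that it reduces to the stated hypothesis when $\mathbf{e}_u$ has full support, which is the generic case for learned dense embeddings.
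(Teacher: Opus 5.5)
Your proof is correct and follows the paper's argument: the same chain rule through $\mathrm{diag}(\mathbf{e}_u)$ gives $\hat{p}_{un}(\mathbf{w}\odot\mathbf{e}_u)$, and the same projection onto $\mathbf{e}_u$ uses the coefficient $\sum_k w_k e_{uk}^2/\|\mathbf{e}_u\|^2$. Your sharper condition ($\mathbf{e}_u\neq\mathbf{0}$ and $\mathbf{w}$ non-constant on $\mathrm{supp}(\mathbf{e}_u)$) is a genuine correction, because the paper's claim that the orthogonal part vanishes only when all entries of $\mathbf{w}$ are equal fails when $\mathbf{e}_u$ has zero coordinates; your explicit $y=1$ computation also fills in a case the paper only asserts.
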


\begin{proof}[Proof of Proposition~\ref{prop:sign_decoupling}]
Since $\hat{p}_{un} = \sigma(\mathbf{w}^\top (\mathbf{e}_u \odot \mathbf{e}_n) + b)$ and $\partial(\mathbf{e}_u \odot \mathbf{e}_n)/\partial\mathbf{e}_n = \text{diag}(\mathbf{e}_u)$, by the chain rule:
\begin{equation}
\frac{\partial \mathcal{L}_\text{sign}}{\partial \mathbf{e}_n} = \hat{p}_{un} \, \text{diag}(\mathbf{e}_u) \, \mathbf{w} = \hat{p}_{un} \, (\mathbf{w} \odot \mathbf{e}_u)
\end{equation}
Decomposing $\mathbf{w} \odot \mathbf{e}_u$ into components parallel and orthogonal to $\mathbf{e}_u$:
\begin{equation}
\mathbf{w} \odot \mathbf{e}_u = \underbrace{\frac{\sum_k w_k e_{uk}^2}{\|\mathbf{e}_u\|^2} \mathbf{e}_u}_{\text{parallel (changes } s_{un}\text{)}} + \underbrace{\left(\mathbf{w} \odot \mathbf{e}_u - \frac{\sum_k w_k e_{uk}^2}{\|\mathbf{e}_u\|^2} \mathbf{e}_u\right)}_{\text{orthogonal (preserves } s_{un}\text{)}}
\end{equation}
The orthogonal component is zero if and only if $\mathbf{w} \odot \mathbf{e}_u \propto \mathbf{e}_u$, which holds only when all entries of $\mathbf{w}$ are equal. Since $\mathbf{w}$ is learned and generically non-constant, the orthogonal component is non-zero.
\end{proof}

The key distinction: $\mathcal{L}_\text{P>N}$ operates entirely in the direction of $\mathbf{e}_u$, so every step improving valence separation also changes the relevance score, a gradient-conflict pattern familiar from multi-task optimization~\cite{yu2020pcgrad}. $\mathcal{L}_\text{sign}$ operates in the direction of $\mathbf{w} \odot \mathbf{e}_u$, generically with an orthogonal component that carries valence information without altering relevance. $\mathcal{L}_\text{sign}$ does not \emph{guarantee} relevance preservation (its parallel component still modifies $s_{un}$), but it provides a degree of freedom that $\mathcal{L}_\text{P>N}$ structurally lacks.

\subsection{Geometric Intuition}
\label{app:gradient_geometry}

The inner product $s_{un} = \mathbf{e}_u^\top \mathbf{e}_n = \sum_k e_{uk} e_{nk}$ is a single scalar that sums over all $d$ embedding dimensions. Any perturbation $\Delta$ to $\mathbf{e}_n$ that is orthogonal to $\mathbf{e}_u$ (i.e., $\mathbf{e}_u^\top \Delta = 0$) leaves this sum unchanged while modifying the individual dimension values. The element-wise product $\mathbf{e}_u \odot \mathbf{e}_n$ retains all $d$ dimension-level values, so a linear classifier $\mathbf{w}^\top (\mathbf{e}_u \odot \mathbf{e}_n)$ can detect changes in individual dimensions that are invisible to the inner product.

Concretely, $\mathcal{L}_\text{sign}$ can be satisfied by adjusting the \emph{relative magnitudes} across embedding dimensions without changing the overall inner product. For example, if positive interactions activate certain dimensions more strongly and negative interactions activate others, the linear classifier can detect this pattern while the inner product (a sum over all dimensions) remains largely unaffected.

$\mathcal{L}_\text{P>N}$, by contrast, produces gradients entirely in the $\mathbf{e}_u$ direction (Proposition~\ref{prop:pn_coupling}), which has no orthogonal component. Every gradient step that separates valence simultaneously modifies the relevance score.

\subsection{Alternative Valence Injection Approaches}
\label{app:gradient_alternatives}

In developing $\mathcal{L}_\text{sign}$, we explored several alternatives, each illustrating a different manifestation of the coupling identified in Proposition~\ref{prop:pn_coupling}. For reference, the signed BPR objective used by SIGformer (our $\mathcal{L}_\text{base}$) takes the form
\begin{equation}
\label{eq:signed_bpr}
\mathcal{L}_\text{base} = -\!\!\!\!\!\sum_{\substack{(u,p) \in \mathcal{E}^+ \\ j \sim \mathcal{U}_u}}\!\!\!\!\! \underbrace{\ln \sigma\!\big(\hat{y}_{up} - \hat{y}_{uj}\big)}_{\mathcal{L}_\text{pos}} \;-\!\!\!\!\!\sum_{\substack{(u,n) \in \mathcal{E}^- \\ j \sim \mathcal{U}_u}}\!\!\!\!\! \underbrace{\ln \sigma\!\big(\beta \cdot (\hat{y}_{un} - \hat{y}_{uj})\big)}_{\mathcal{L}_\text{neg}},
\end{equation}
where $p, n, j$ denote a positive, negative, and sampled unobserved item, and $\beta \in [-1, 1]$ controls the direction of the negative-feedback signal. We refer to the second sum (the negative-feedback term) as $\mathcal{L}_\text{neg}$ in what follows.

\textbf{Direct score separation.} Adding a BPR term $\mathcal{L}_\text{P>N} = -\sum \ln \sigma(\hat{y}_{up} - \hat{y}_{un})$ conflicts with the base model's negative term $\mathcal{L}_\text{neg}$ (second sum in Eq.~\ref{eq:signed_bpr}). The two push $\hat{y}_{un}$ in opposite directions on the same parameters at every step (assuming $\beta > 0$, which holds on four of the five benchmarks):
\begin{align}
\frac{\partial \mathcal{L}_\text{neg}}{\partial \hat{y}_{un}} &< 0 \quad \text{(pushes $\hat{y}_{un}$ up)}, \\
\frac{\partial \mathcal{L}_\text{P>N}}{\partial \hat{y}_{un}} &> 0 \quad \text{(pushes $\hat{y}_{un}$ down)}.
\end{align}

\textbf{Score space decomposition.} Splitting the embedding into relevance and valence subspaces ($\mathbf{e} = [\mathbf{e}^\text{rel} | \mathbf{e}^\text{val}]$) and computing $\hat{y} = \mathbf{e}_u^\text{rel} \cdot \mathbf{e}_i^\text{rel} + \mathbf{e}_u^\text{val} \cdot \mathbf{e}_i^\text{val}$ did not resolve the conflict, as both subspaces share gradients through the combined score.

\textbf{Dual-encoder architectures.} Using separate encoders for relevance (e.g., LightGCN on unsigned graph) and valence (e.g., SIGformer on signed graph) avoids gradient mixing, but the relevance encoder proved weaker than the original SIGformer, resulting in net performance loss.

\textbf{Gradient detachment.} Computing valence loss on detached embeddings ($\mathbf{e}.\text{detach()}$) prevents valence gradients from affecting relevance learning, but also prevents valence from influencing the model at all, rendering the auxiliary task ineffective.

The common failure mode across all approaches is that their gradients are either fully parallel to $\mathbf{e}_u$ (as in direct separation, Proposition~\ref{prop:pn_coupling}) or mix relevance and valence directions in ways that cannot be controlled. $\mathcal{L}_\text{sign}$ addresses this by introducing a gradient component orthogonal to $\mathbf{e}_u$ (Proposition~\ref{prop:sign_decoupling}), providing a channel for valence injection that is structurally absent in the alternatives.

\subsection{Empirical Validation: \texorpdfstring{$\mathcal{L}_\text{sign}$}{L\_sign} Ablation}
\label{app:gradient_ablation}

We empirically validate the two-stage view implied by Propositions~\ref{prop:pn_coupling}--\ref{prop:sign_decoupling}: $\mathcal{L}_\text{sign}$ enriches the embedding space at training time (via the orthogonal gradient component), and an inference-time scoring adjustment $\lambda_\text{inf}$ extracts the valence information that the inner product alone cannot read (Appendix~\ref{app:impl_lsign} for the inference-time formula). We disentangle the two mechanisms by varying each hyperparameter while fixing the other, on KuaiRand because it exhibits both substantial negative infiltration (Table~\ref{tab:neg_infiltration}) and non-trivial baseline SNDCG.

\begin{figure}[h]
    \centering
    \begin{subfigure}[t]{0.48\linewidth}
        \centering
        \includegraphics[width=\linewidth]{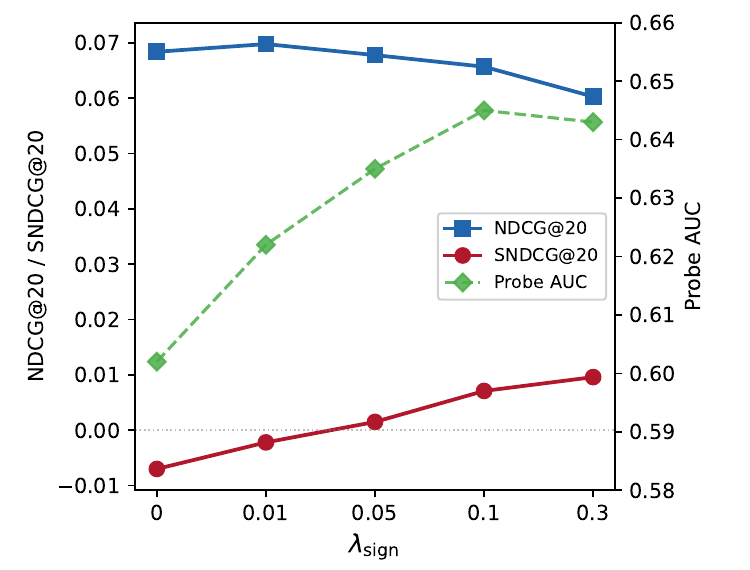}
        \caption{Training effect ($\lambda_\text{inf}\!=\!2$ fixed).}
        \label{fig:ablation_a}
    \end{subfigure}
    \hfill
    \begin{subfigure}[t]{0.48\linewidth}
        \centering
        \includegraphics[width=\linewidth]{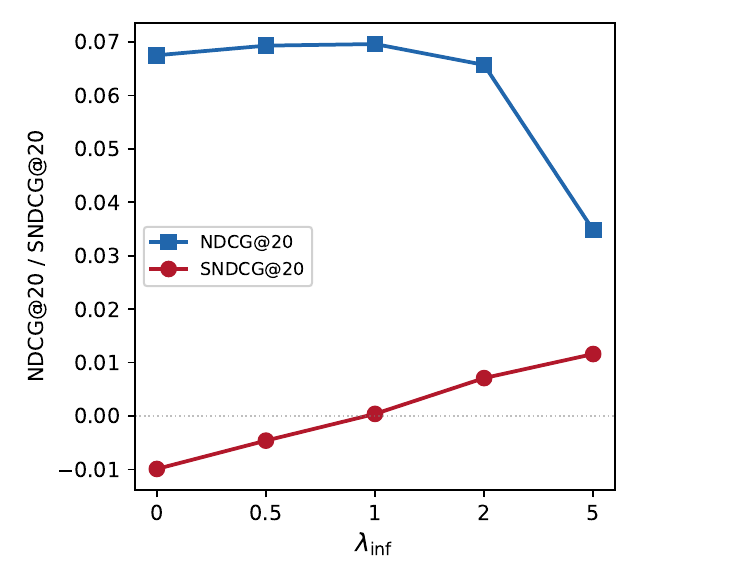}
        \caption{Inference effect ($\lambda_\text{sign}\!=\!0.1$ fixed).}
        \label{fig:ablation_b}
    \end{subfigure}
    \caption{Ablation of $\mathcal{L}_\text{sign}$ on KuaiRand ($K\!=\!20$), Both P\&N pool. Dashed gray line: SNDCG~$=0$.}
    \label{fig:ablation}
\end{figure}

\textbf{Training effect (Figure~\ref{fig:ablation_a}).} As $\lambda_\text{sign}$ increases with $\lambda_\text{inf}\!=\!2$, probe AUC rises from $0.602$ to $0.645$ and SNDCG improves correspondingly, crossing zero between $\lambda_\text{sign}\!=\!0.01$ and $0.05$. NDCG drops modestly (from $0.0684$ to $0.0603$ at $\lambda_\text{sign}\!=\!0.3$), reflecting the parallel gradient component identified in Proposition~\ref{prop:sign_decoupling}: even though $\mathcal{L}_\text{sign}$ has an orthogonal channel, its parallel component still perturbs the inner-product score. Probe AUC plateaus at $\lambda_\text{sign}\!\geq\!0.1$, indicating that valence information saturates the embedding space.

\textbf{Inference effect (Figure~\ref{fig:ablation_b}).} With $\lambda_\text{sign}\!=\!0.1$ fixed, probe AUC stays constant at $0.645$ (the embeddings are unchanged), while SNDCG rises monotonically from $-0.0099$ to $+0.0116$ as $\lambda_\text{inf}$ increases. NDCG degrades sharply at high $\lambda_\text{inf}$ (a $49\%$ drop at $\lambda_\text{inf}\!=\!5$), so $\lambda_\text{inf}\!=\!2$ is the best operating point under the $5\%$ NDCG-preservation constraint.

\textbf{Both mechanisms are independently necessary.} With $\lambda_\text{sign}\!=\!0$ there is no valence information in the embeddings to leverage, regardless of $\lambda_\text{inf}$. With $\lambda_\text{inf}\!=\!0$ the enriched embedding does not translate into ranking improvements because inner-product scoring cannot access the valence patterns, consistent with the probe-vs-V-AUC gap reported in Table~\ref{tab:probe} and the geometric intuition of \S\ref{app:gradient_geometry}. The ablation confirms the two-stage view: training injects valence into embeddings (an embedding-level operation), and inference extracts it at scoring time (a scoring-level operation).

\section{Implementation Details}
\label{app:impl}

\subsection{Common Configuration}
\label{app:impl_common}

All models share the following settings unless stated otherwise. Embedding dimension $d\!=\!64$, evaluation at $K\!=\!20$, signed metric weight $\gamma\!=\!1.0$. Optimizer is Adam ($\beta_1\!=\!0.9$, $\beta_2\!=\!0.999$). All results are averaged over three runs with seeds $\{42, 2024, 2025\}$. Datasets are split 7:1:2 (train/validation/test) per run with 5-core filtering. Interactions are labeled positive when rating $\geq 4$ (Amazon, Epinions), \texttt{is\_click}$=1$ (KuaiRand), or \texttt{watch\_ratio} $\geq 1.0$ (KuaiRec); otherwise negative. Unsigned baselines treat all observed interactions as implicit positives. Early stopping monitors validation NDCG@20 by default (exception: NFARec monitors validation loss). Table~\ref{tab:hyperparams} reports the per-model hyperparameters used in our experiments, exactly as configured in our released code and scripts; running these scripts reproduces our reported numbers. Per-method values follow each baseline's released code (the configurations the original authors provide as ready-to-run defaults), with the shared infrastructure choices above standardized across methods for fair comparison.

\begin{table*}[t]
\caption{Per-model hyperparameters. \#Params reported on Amazon-CDs ($51{,}267$ users $\times$ $46{,}464$ items, $d\!=\!64$); model size scales proportionally across datasets. Type: U = unsigned, S = sign-aware. Architecture: G = GCN, T = Graph Transformer, M = MLP, L = LightGCN conv, sGCN = signed GCN. Neg/Pos: number of BPR-style negative samples per positive; ``--'' indicates the model does not use BPR-style negative sampling. Init: N($0, \sigma$) = normal, XU = Xavier uniform, XN = Xavier normal.}
\label{tab:hyperparams}
\centering
\resizebox{0.95\textwidth}{!}{%
\begin{tabular}{l c r c l r r r r l l}
\toprule
\textbf{Model} & \textbf{Type} & \textbf{\#Params} & \textbf{LR} & \textbf{Layers} & \textbf{Epochs} & \textbf{Patience} & \textbf{Batch} & \textbf{Neg/Pos} & \textbf{Init} & \textbf{Other} \\
\midrule
LightGCN   & U & 6.3M   & 1e-3 & 3 G             & 1000 & 50  & 2048 & 1  & N(0,0.1) & decay=1e-4 \\
LightGCL   & U & 6.3M   & 1e-3 & 2 G             & 100  & 20  & 4096 & 1  & XU       & cl\_$\lambda$=0.2, temp=0.2 \\
XSimGCL    & U & 6.3M   & 1e-3 & 2 G             & 100  & 20  & 2048 & 1  & XU       & cl=0.2, eps=0.2 \\
GFormer    & U & 6.3M   & 1e-3 & 2G+1T$^\dagger$ & 100  & 20  & 4096 & 1  & XU       & ssl\_reg=1.0 \\
\midrule
SIGformer  & S & 6.3M   & 1e-2 & 3 sGCN          & 1000 & 200 & full$^\S$ & 1  & N(0,0.1) & $\alpha, \beta$ per-dataset$^\ddagger$ \\
SiReN      & S & 12.5M  & 5e-3 & 4G+2M           & 200  & 40  & 1024 & 40 & XN       & reg=0.05 \\
SiGRec     & S & 12.5M  & 1e-3 & 3 sGCN          & 1000 & 50  & 2048 & 1  & XU       & $\mu$=0.3, $\omega$=300 \\
Pone-GNN   & S & 12.5M  & 5e-3 & 4G+2M           & 1000 & 10  & 2048 & 40 & XN       & reg=5e-5, dual emb \\
SDCGCL     & S & 6.3M   & 3e-3 & 2 dual          & 100  & 20  & 2048 & 1  & XU       & eps=0.2, $\tau$=0.15 \\
NFARec     & S & 193.6M & 1e-2 & 1 THP           & 20$^\P$ & 10  & 32$^\P$ & -- & XU       & d\_model=1024, n\_head=1, emb=N(0,1) \\
\bottomrule
\multicolumn{11}{p{0.92\textwidth}}{\footnotesize $^\dagger$ GFormer on Amazon-CDs uses 3G+2T (see \S\ref{app:impl_overrides}). $^\ddagger$ SIGformer uses dataset-specific $\alpha, \beta$ (see \S\ref{app:impl_overrides}). $^\S$ SIGformer uses full-batch training; the value 1024 in the original code refers to evaluation batch only. $^\P$ NFARec uses 30 epochs on Music (else 20) and batch 16 on Music and KuaiRec (else 32).}
\end{tabular}%
}
\end{table*}

\subsection{Dataset-Specific Overrides}
\label{app:impl_overrides}

Two models use dataset-specific configurations:
\begin{itemize}
\item \textbf{SIGformer}: The signed-BPR parameters $(\alpha, \beta)$ are tuned per dataset following the original paper:
\begin{center}
\small
\begin{tabular}{lccccc}
\toprule
 & CDs & Music & Epin. & KuaiRand & KuaiRec \\
\midrule
$\alpha$ & 0.4 & 0.0 & 0.4 & 0.2 & $-0.8$ \\
$\beta$  & 1.0 & 1.0 & 1.0 & 1.0 & $-0.2$ \\
\bottomrule
\end{tabular}
\end{center}
On KuaiRec, $\beta\!=\!-0.2$ reverses the direction of negative feedback in the signed BPR loss, reflecting the original paper's recommendation for datasets with dense negative interactions.
\item \textbf{GFormer on Amazon-CDs}: 3 GCN + 2 Transformer layers (default 2 GCN + 1 Transformer). The default configuration did not converge on the larger CDs graph; a small grid search over layer counts, learning rate, and ssl\_reg yielded the 3G+2T configuration.
\end{itemize}

\subsection{Sign Classification Loss (\texorpdfstring{$\mathcal{L}_\text{sign}$}{L\_sign}) Configuration}
\label{app:impl_lsign}

$\mathcal{L}_\text{sign}$ is applied to SIGformer as the base model (proof-of-concept experiment in \S\ref{sec:exp}). The auxiliary classifier $\hat{p}_{ui} = \sigma(\mathbf{w}^\top (\mathbf{e}_u \odot \mathbf{e}_i) + b)$ adds $d + 1 = 65$ parameters on top of the base model.

\textbf{Inference-time scoring.} At inference, the classifier output $\hat{p}_{ui}$ is incorporated into the final ranking score via a valence-aware adjustment:
\begin{equation}
\label{eq:final_score}
\hat{y}_{ui}^\text{final} = \hat{y}_{ui} + \lambda_\text{inf} \cdot \big(2\hat{p}_{ui} - 1\big),
\end{equation}
where $\hat{y}_{ui} = \mathbf{e}_u^\top \mathbf{e}_i$ is the original relevance score and $(2\hat{p}_{ui} - 1) \in [-1, +1]$ acts as a valence adjustment that pulls predicted-positive items up and predicted-negative items down. Setting $\lambda_\text{inf}\!=\!0$ recovers the original SIGformer scoring, isolating the effect of embedding enrichment from the score-level adjustment; this isolation is what makes the ablation in \S\ref{app:gradient_ablation} interpretable.

\textbf{Hyperparameter search.} The grid is $\lambda_\text{sign} \in \{0.01, 0.05, 0.1, 0.3\}$ and $\lambda_\text{inf} \in \{0, 0.1, 0.5, 1, 2, 5\}$. Per dataset we select the configuration that maximizes SNDCG@20 subject to NDCG@20 $\geq 0.95 \times$ the baseline's NDCG@20; the selected $(\lambda_\text{sign}, \lambda_\text{inf})$ per dataset are reported in Table~\ref{tab:lsign}.

\subsection{Training Cost}
\label{app:impl_compute}

Per-dataset, per-seed wall-clock times on a single 24GB GPU fall into three groups. Fast (under 30 minutes): LightGCL, XSimGCL, SDCGCL, Pone-GNN, NFARec (20 epochs). Medium (1--3 hours): LightGCN, SiGRec. Slow ($\geq$2 hours): SIGformer ($\sim\!2$h), SiReN ($\sim\!3$h), GFormer ($\sim\!6$h). Total experimental compute for the results reported in this paper is approximately 140 GPU-hours across 10 methods $\times$ 5 datasets $\times$ 3 seeds, primarily on 24GB-class research GPUs. NFARec on Amazon-CDs is the exception: its default $d_\text{model}=1024$ and the resulting $46{,}464^2$ correlation matrix require $\geq 80$GB GPU memory, which we honored per the original paper. Replicators with smaller hardware may substitute $d_\text{model}\!=\!512$, which degrades NDCG on CDs by approximately 8\% in our preliminary tests but preserves the valence failure phenomenon identified in \S\ref{sec:analysis}.

\subsection{On Baseline Fairness}
\label{app:impl_fairness}

A natural concern is whether the per-method hyperparameters reported in Table~\ref{tab:hyperparams} confer differential advantage. We use each method's released code/scripts as the reproducibility-friendly configuration: original-paper hyperparameters often presuppose hardware or dataset conditions different from ours, and the released scripts encode the authors' refined defaults for general use. The values in Table~\ref{tab:hyperparams} are therefore exactly what our experiments executed, and rerunning our scripts reproduces the reported numbers. SIGformer's per-dataset $(\alpha, \beta)$ tuning (the only method that prescribes this at the level documented here) and GFormer's Amazon-CDs layer override (where the default did not converge) are documented separately in \S\ref{app:impl_overrides}. Other sign-aware methods (SiReN, SiGRec, Pone-GNN, SDCGCL, NFARec) use their respective released-code configurations without per-dataset tuning beyond what is specified in their public scripts.